\numberwithin{equation}{section}
\declaretheoremstyle[bodyfont=\it,qed=\qedsymbol]{noproofstyle}
\declaretheorem[name=Observation,numbered=no]{observation*}
\declaretheorem[numberlike=equation]{theorem}
\declaretheorem[name=Theorem,numbered=no]{theorem*}
\declaretheorem[numberlike=equation]{lemma}
\declaretheorem[name=Lemma,numbered=no]{lemma*}
\declaretheorem[name=Corollary,numbered=no]{corollary*}
\declaretheorem[name=Proposition,numbered=no]{proposition*}
\declaretheorem[name=Claim,numbered=no]{claim*}
\declaretheorem[name=Conjecture,numbered=no]{conjecture*}
\declaretheorem[name=Question,numbered=no]{question*}
\declaretheoremstyle[bodyfont=\it,qed=$\lozenge$]{defstyle} 
\declaretheorem[numberlike=equation,style=defstyle]{definition}
\declaretheorem[unnumbered,name=Definition,style=defstyle]{definition*}
\declaretheorem[numberlike=equation,style=defstyle]{example}
\declaretheorem[unnumbered,name=Example,style=defstyle]{example*}
\declaretheorem[unnumbered,name=Notation=defstyle]{notation*}
\declaretheorem[unnumbered,name=Construction,style=defstyle]{construction*}
\declaretheorem[unnumbered,name=Remark,style=defstyle]{remark*}
\title{Pseudo-Deterministic Construction of Irreducible Polynomials over Finite Fields}
\author{
    {Shanthanu S Rai \thanks{Tata Institute of Fundamental Research, Mumbai, India. Email: \texttt{shanthanu.rai@tifr.res.in}.  Research supported by the Department of Atomic Energy, Government of India, under project 12-R{\&}D-TFR-5.01-0500.}}
}
\date{}
\begin{document}

\maketitle

\begin{abstract}
We present a polynomial-time pseudo-deterministic algorithm for constructing
irreducible polynomial of degree $d$ over finite field $\mathbb{F}_q$. A
pseudo-deterministic algorithm is allowed to use randomness, but with high
probability it must output a canonical irreducible polynomial. Our construction
runs in time $\tilde{O}(d^4 \log^4{q})$. 

Our construction extends Shoup's deterministic algorithm (FOCS 1988) for the
same problem, which runs in time $\tilde{O}(d^4 p^{\frac{1}{2}} \log^4{q})$
(where $p$ is the characteristic of the field $\mathbb{F}_q$). Shoup had shown
a reduction from constructing irreducible polynomials to factoring polynomials
over finite fields. We show that by using a fast randomized factoring algorithm,
the above reduction yields an efficient pseudo-deterministic algorithm for
constructing irreducible polynomials over finite fields.
    
\end{abstract}

\section{Introduction}
\label{section:introduction}

A polynomial $f(X)$ over a finite field $\Fq$ ($q$ is a prime power) is
said to be irreducible if it doesn't factor as $f(X) = g(X) h(X)$ for some
non-trivial polynomials $g(X)$ and $h(X)$. Irreducible polynomials over finite
fields are algebraic analogues of primes numbers over integers. It is natural to
ask if one can construct an irreducible polynomial of degree $d$ over $\Fq$
efficiently. Constructing these irreducible polynomials are important since they
yield explicit construction of finite fields of non-prime order. Working over
such non-prime finite fields is crucial in coding theory, cryptography,
pseudo-randomness and derandomization. Any algorithm that constructs irreducible
polynomials of degree $d$ over $\Fq$ would output $d \log{q}$ bits, so we expect
an efficient algorithm for constructing irreducible polynomials would run in
time $\poly(d, \log{q})$.

About $\frac{1}{d}$ fraction of polynomials of degree $d$ are irreducible over
$\Fq$~\cite[Ex. 3.26 and 3.27]{lidl-niederreiter-94}. This gives a simple
``trial and error'' randomized algorithm for constructing irreducible
polynomials, namely, pick a random degree $d$ polynomial and check if it is
irreducible. We can use Rabin's algorithm~\cite{rabin-80} for checking if a
polynomial is irreducible, which can be implemented in
$\tilde{O}(d\log^2{q})$~\cite[Section
8.2]{kedlaya-umans-11}\footnote{$\tilde{O}$ notation omits $\log$ factors in $d$
and $\log{q}$}. In order to improve the probability of finding an irreducible
polynomial to $\frac{1}{2}$, we sample about $d$ polynomials of degree $d$ and
check if any one of them is irreducible. Thus, the ``trial and error'' algorithm
runs in time $\tilde{O}(d^2 \log^2{q})$. Couveignes and
Lercier~\cite{couveignes-lercier-2013} give an alternative randomized algorithm
that runs in time $\tilde{O}(d \log^5{q})$, which is optimal in the exponent of
$d$. Their algorithm constructs irreducible polynomials by using isogenies
between elliptic curves.

Motivated by this, it is natural to ask if there is also an efficient
deterministic algorithm for constructing irreducible polynomials. In the 80s,
some progress was made towards this problem. Adleman and
Lenstra~\cite{adleman-lenstra-86} gave an efficient deterministic algorithm for
this problem conditional on the generalized Riemann hypotheses. They also gave
an unconditional deterministic algorithm which outputs an irreducible polynomial
of degree approximately $d$. Shoup~\cite{shoup-88} gives a deterministic
algorithm of constructing degree $d$ irreducible polynomial which runs in time
$\tilde{O}(d^4 p^{\frac{1}{2}} \log^4{q})$ (where $p$ is the characteristic of
$\Fq$). So, Shoup's algorithm is efficient for fields of small characteristic
($p << d$). But when $p$ is large (say super exponential in $d$), the algorithm
does not run in polynomial time due to the $p^{\frac{1}{2}}$ factor in the run
time. Since then, there hasn't been much progress towards this problem and in
particular, the problem of efficient and unconditional deterministic
construction of irreducible polynomials over $\Fq$ remains open! In fact, the
special case of efficient and unconditional deterministic construction of
quadratic non-residues in $\Fp$ is also open.

One can ask similar questions in the integer world, namely, ``How to efficiently
construct $n$-bit prime numbers?''. By the Prime Number Theorem, there are about
$\frac{1}{n}$ $n$-bit prime numbers (note the similarity between density of
primes and density of irreducible polynomials over $\Fq$). Again this gives a
simple randomized algorithm of just sampling a random $n$-bit number and
checking if it's prime using AKS primality test~\cite{agrawal-kayal-saxena-05}.
But here too, there is no known efficient deterministic algorithm for
constructing $n$-bit prime numbers~\cite{tao-croot-helfgott-12}. 

Due to the difficulty in finding deterministic algorithms for these problems, we
ask a slightly weaker but related question. Are there efficient
pseudo-deterministic algorithms for these problems?

\begin{definition}
\label{def:pseudo-det-algo}
A pseudo-deterministic algorithm is a randomized algorithm which for a given
input, generates a canonical output with probability at least $\frac{1}{2}$.
\end{definition}

Gat and Goldwasser~\cite{gat-goldwasser-2011} first introduced the notion
of pseudo-deterministic algorithm (they had called it Bellagio algorithm).
Pseudo-deterministic algorithm can be viewed as a middle ground between a
randomized and a deterministic algorithm. From an outsider's perspective, a
pseudo-deterministic algorithm seems like a deterministic algorithm in the sense
that with high probability it outputs the same output for a given input.
The breakthrough result of Chen et al.~\cite{chen-lu-oliveira-santhanam-2023}
gave a polynomial-time pseudo-deterministic algorithm for constructing $n$-bit
prime numbers in the infinitely often regime.

\begin{theorem}
    There is a randomized polynomial-time algorithm $B$ such that, for
    infinitely many values of $n$, $B(1^n)$ outputs a canonical $n$-bit prime
    $p_n$ with high probability.
\end{theorem}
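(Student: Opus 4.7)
The plan is to follow a win-win derandomization strategy that works in the infinitely-often regime. The starting point is the classical randomized construction: by the Prime Number Theorem, roughly a $1/n$ fraction of $n$-bit integers is prime, so sampling random $n$-bit integers and testing primality with AKS yields \emph{some} $n$-bit prime in polynomial time with high probability. The difficulty is producing a \emph{canonical} prime, i.e., concentrating the output distribution on a single value independent of the randomness.

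First, I would reframe the task as hitting-set derandomization for the dense, polynomial-time recognizable set $P_n$ of $n$-bit primes. A deterministic polynomial-size hitting set $H_n \subseteq \set{0,1}^n$ guaranteed to meet $P_n$ would immediately give a pseudo-deterministic (in fact deterministic) construction: output the lexicographically smallest prime in $H_n$. I would obtain such an $H_n$ via a case analysis on circuit complexity. In the ``lower-bound'' branch, one assumes $\EXP \not\subseteq \mathsf{P}/\mathsf{poly}$ (or a suitable uniform variant) and uses a Nisan--Wigderson-style hardness-to-randomness transformation to build a polynomial-time PRG against polynomial-size circuits; its range is a hitting set for $P_n$ because AKS primality testing has polynomial-size circuits. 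In the ``collapse'' branch, one assumes the negation and leverages the resulting complexity collapse to search in $\poly(n)$ time for a canonical prime via a short enumeration derived from the collapse witness. In either case, canonicity is enforced by returning the minimum prime produced.

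The main obstacle is that neither branch is unconditional on its own, so combining them requires a diagonalization that can only be synchronized at infinitely many input lengths. I would handle this by enumerating candidate hard truth tables (respectively, candidate collapse witnesses) of increasing size; for each candidate the algorithm runs the corresponding construction and uses AKS to verify that the output is prime, then takes the lexicographically smallest successful output. A padding argument shows that for infinitely many $n$ the correct candidate appears within $\poly(n)$ steps, at which point the verification stage locks onto a unique prime $p_n$ with high probability. The hardest technical step, I expect, is guaranteeing that the entire verification procedure is itself pseudo-deterministic: the ordering of candidates and the tie-breaking rule must be fixed so that the final output depends only on $n$ and not on the auxiliary randomness consumed inside the PRG construction or the enumeration, and the infinitely-often $n$ on which the padding is tight must coincide with those on which verification concentrates.
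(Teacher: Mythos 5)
This theorem is not proved in the paper you are reading: it is quoted verbatim as a prior result of Chen, Lu, Oliveira, Ren, and Santhanam (cited as \cite{chen-lu-oliveira-santhanam-2023}) and is used only as motivation for the paper's own, quite different, contribution about irreducible polynomials. So there is no ``paper's own proof'' to compare against; what you have written is a reconstruction of someone else's argument from its high-level description.

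Judged on its own, your sketch captures the right genre --- a win-win, hardness-versus-collapse derandomization applied to the dense, polynomial-time-recognizable set of $n$-bit primes, with canonicity enforced by taking a lexicographic minimum --- but it does not match the actual mechanism, and the step you flag as ``the hardest technical step'' is in fact where the argument lives and where your sketch stops short. Concretely: (i) the CLORS construction does not go through the classical Nisan--Wigderson / Impagliazzo--Wigderson dichotomy on $\EXP$ versus $\mathsf{P}/\mathrm{poly}$; it uses Korten's reduction (dense easy sets reduce to Range Avoidance, which reduces to constructing hard truth tables) together with the Chen--Tell \emph{targeted} hitting-set generator and a recursive bootstrapping over a hierarchy of input lengths, which is where ``infinitely often'' comes from. (ii) Your ``collapse branch'' is underspecified: it is not clear what collapse you would obtain from the negation of your hardness assumption, nor how it yields a \emph{pseudo-deterministic} (as opposed to merely randomized) search for a canonical prime; this is precisely the subtle point. (iii) The enumeration-and-lex-min idea does not automatically concentrate the output: if the candidate hard functions are themselves produced using randomness, or if different candidates succeed on different runs, the lex-min over ``successful outputs'' is not a function of $n$ alone. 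The CLORS proof has to work to ensure that, on the good input lengths, a single candidate is singled out deterministically. As a proof of the stated theorem your proposal therefore has a genuine gap at the core combination step; as a summary of the flavor of the known proof it is serviceable but attributes the win-win to the wrong dichotomy and omits Korten's reduction and the Chen--Tell generator, which are the actual load-bearing ingredients.
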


In particular, their algorithm doesn't give valid outputs for all values of the
input $n$. Surprisingly, their algorithm is based on complexity theoretic ideas,
and not number theoretic ideas. In fact, they show a more general result that if
a set of strings $Q$ are ``dense'' and it is ``easy'' to check if a string
$x$ is in $Q$, then there is an efficient pseudo-deterministic algorithm for
generating elements of $Q$ of a particular length in the infinitely often
regime. Both prime numbers and irreducible polynomials over $\Fq$ satisfy this
property. Thus, this gives an efficient pseudo-deterministic algorithm for
constructing irreducible polynomials over $\Fq$ in the infinitely often regime.

But not only does this algorithm not work for all $d$, there are no good density
bounds for the fraction of $d$ where the algorithm gives valid output. So it is
natural to ask if we can extend this result to all values of degree $d$ over all
finite fields $\Fq$. In this paper, we present a more direct
pseudo-deterministic algorithm for constructing irreducible polynomials over
$\Fq$ (for all degrees $d$) which crucially relies on the structure of
irreducible polynomials. Our result extends Shoup's~\cite{shoup-88}
deterministic algorithm for constructing irreducible polynomials. Shoup reduces
the problem of constructing irreducible polynomials to factoring polynomials
over $\Fq$. We observe that by making use of the fast randomized factoring
algorithm, and the ``canonization'' process described by Gat and
Goldwasser~\cite{gat-goldwasser-2011} for computing $q$-th residues over $\Fp$,
the above reduction yields an efficient pseudo-deterministic algorithm for
constructing irreducible polynomials over $\Fq$.

\begin{theorem}
    \label{theorem:main-theorem}
    There is a pseudo-deterministic algorithm for constructing an irreducible
    polynomial of degree $d$ over $\Fq$ ($q$ is prime power) in expected time
    $\tilde{O}(d^4 \log^4{q})$.
\end{theorem}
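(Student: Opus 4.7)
The plan is to follow Shoup's reduction essentially verbatim, but replace the one step whose deterministic implementation costs the $p^{1/2}$ factor, namely the factorization of univariate polynomials over $\F_q$, with a fast randomized subroutine such as Cantor--Zassenhaus. Shoup's reduction is itself deterministic: from some carefully chosen auxiliary polynomial (built out of roots of unity of small prime power order, with the order depending on $d$ via the structure of $\Fq$) it extracts a specific irreducible factor of degree $d$. The only nondeterministic ingredient is the subroutine that actually produces the list of irreducible factors. So the overall runtime becomes the cost of the reduction plus polynomially many calls to polynomial factoring, which brings the total into $\tilde{O}(d^4 \log^4 q)$ expected time.

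The key issue to address is pseudo-determinism: a randomized factoring routine may return the irreducible factors of a given polynomial in any order, or even present them in different ``internal coordinates,'' so naive composition with Shoup's reduction need not give a canonical output. I would resolve this at two levels. First, each time the algorithm factors a polynomial $f(X) \in \F_q[X]$, the resulting \emph{set} of irreducible factors is an invariant of $f$; I would sort these factors in a fixed canonical order (e.g., lexicographically on coefficient vectors) so that the output of the factoring step is a canonical tuple with probability $1$. Second, whenever Shoup's reduction needs to choose a specific root of unity or a specific element of a group (the place where, over $\F_p$, one needs to extract a $q$-th residue), I would plug in the Gat--Goldwasser canonization process cited in the introduction, which takes a randomized root-extraction procedure and returns, with high probability, a canonical element of the relevant orbit.

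The main obstacle I anticipate is bookkeeping in the canonization step rather than any new idea: I need to argue that every internal randomized choice in the composed algorithm either (i) factors through a quantity that is mathematically unique and hence canonical once we fix the ordering convention, or (ii) is handled by Gat--Goldwasser canonization at an additive cost absorbed in $\tilde{O}(d^4 \log^4 q)$. Given this, a union bound over the (polynomially many) randomized subroutine calls shows that with probability at least $\tfrac{1}{2}$ every call returns its canonical answer, and hence the whole pipeline returns the same irreducible polynomial of degree $d$ over $\Fq$ with probability at least $\tfrac{1}{2}$. Combined with the cost accounting ($\tilde{O}(d^2 \log^2 q)$ per factoring call via Cantor--Zassenhaus, $\tilde{O}(d^2 \log^2 q)$ for canonization of an extracted root, and Shoup's $\tilde{O}(d^2)$-fold overhead in the reduction), this yields the claimed expected running time and establishes \Cref{theorem:main-theorem}.
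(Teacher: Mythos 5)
Your proposal matches the paper's approach at every essential point: keep Shoup's deterministic reduction to polynomial factorization, replace the deterministic factoring step with a fast Las Vegas equal-degree factoring routine, and canonize the randomized choices by always picking the lexicographically smallest irreducible factor together with the Gat--Goldwasser iterated-$q$-th-root process to obtain a canonical $q$-th non-residue. The paper additionally routes through an explicit reduction from $\F_{p^k}$ down to $\F_p$ before invoking Shoup's theorem, and reports expected runtime of an always-correct Las Vegas procedure rather than truncating and union-bounding over failure probabilities, but these are presentational rather than substantive differences.
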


\section{Overview}

As mentioned earlier, Shoup's deterministic algorithm~\cite{shoup-88} is
efficient for fields of small characteristic. We extend Shoup's algorithm and
make it efficient over all fields, but at the cost of making the algorithm
pseudo-deterministic. In order to see the main ideas involved, let's consider a
toy problem of constructing irreducible polynomial of degree 2 over $\Fp$ ($p$
is prime). Suppose we could get our hands on some quadratic non residue
$\alpha$, then $X^2 - \alpha$ would be irreducible. There are $\frac{p-1}{2}$
quadratic non residues in $\Fp$, so if we randomly pick an $\alpha \in \Fp$ and
output $X^2 - \alpha$, it would be irreducible with about $\frac{1}{2}$
probability. But this approach wouldn't be pseudo-deterministic, since in each
run we will very likely choose different $\alpha$.

In order to obtain a canonical quadratic non residue $\alpha$, we first set
$\alpha = -1$ and repeatedly perform $\alpha \gets \sqrt{\alpha}$ (choosing the
smallest square root) until $\alpha$ is a quadratic non residue. Here, $\beta$
is a square root of $\alpha$ if $\beta^2 = \alpha \pmod{p}$. For computing the
square root, we can use Cantor-Zassenhaus randomized factoring
algorithm~\cite{cantor-zassenhaus-81}. In \cref{example:qnr}, we illustrate the
above strategy over a specific finite field.
\cref{algo:pseudo-det-degree-2-irred}
implements this strategy.

\begin{example}
    \label{example:qnr}
    Let's try to pseudo-deterministically construct a quadratic non-residue in
    $\F_{73}$. We first set $\alpha = -1$.
    Square roots of $-1 \pmod{73}$ are $27$ and $46$. The square roots are
    computed using Cantor-Zassenhaus randomized factoring
    algorithm~\cite{cantor-zassenhaus-81}. 
    
    We choose the smallest square root $27$ and set $\alpha = 27$. Square roots
    of $27 \pmod{73}$ are $10$ and $63$. We choose the smallest square root $10$
    and set $\alpha = 10$. Since $10$ is a quadratic non-residue, we output $10$
    (we use Euler's criterion\footnote{Euler's criterion: For odd prime $p$, $a$
    is a quadratic non residue iff $a^{(p-1)/2}=-1$} to check if $10$ is a
    quadratic non-residue).
\end{example}

\begin{algorithm}[H]
    \caption{Pseudo-deterministically constructing irreducible polynomial of degree 2 over $\Fp$}
    \label{algo:pseudo-det-degree-2-irred}
    \begin{algorithmic}[1]
        \State $\alpha \gets -1$
        \While{$\alpha$ is a quadratic residue} \label{line:while-condition}
            \State Factorize $X^2 - \alpha = (X - \beta_1) (X - \beta_2)$ \label{line:factorize}
            \State $\alpha \gets \min(\beta_1, \beta_2)$
        \EndWhile
        \State Output $X^2 - \alpha$
    \end{algorithmic}
\end{algorithm}

Suppose $p - 1 = 2^k l$ (where $l$ is odd). Each time we take square root, the
order\footnote{Order of $\alpha$ is the least integer $k > 0$ such that
$\alpha^k = 1$ in $\Fp$} of $\alpha \pmod p$ doubles. Since the order of
$\alpha$ divides $\abs{\Fp^{*}} = 2^k l$ (by Lagrange's theorem), we can
repeatedly take square roots in \cref{algo:pseudo-det-degree-2-irred} at most
$k$ times. Thus, \cref{algo:pseudo-det-degree-2-irred} will terminate with at
most $\log{p}$ iterations of the while loop. This algorithm is based on Gat and
Goldwasser's algorithm~\cite{gat-goldwasser-2011} for computing $q$-th residues
over $\Fp$. The algorithm is pseudo-deterministic since at each iteration of the
while loop, we ``canonize'' our choice of square root by picking the smallest
one among the two choices. Note that we used Euler's criterion for checking if
$\alpha$ is a quadratic residue or not in Line~\ref{line:while-condition}.

We can generalize the above ideas for constructing irreducible polynomials over
finite fields. Shoup~\cite{shoup-88} showed that constructing irreducible
polynomials over $\Fp$ reduces to finding $q$-th non residues over appropriate
field extensions ($q$ is prime). These $q$-th non residues can be
pseudo-deterministically constructed using similar techniques as in
\cref{algo:pseudo-det-degree-2-irred}.

The rest of the paper is organized as follows. We start with some preliminaries
in \cref{section:preliminaries}. In
\cref{section:construction-over-extension-field}, we will reduce the problem of
constructing irreducible polynomials over extensions fields $\F_{p^k}$ to
constructing them over $\Fp$. \cref{section:construction-over-prime-field} will
make use of Shoup's observation mentioned in previous paragraph to construct
irreducible polynomials over $\Fp$. Finally, in \cref{section:conclusion} we
conclude with some open problems.

\section{Preliminaries}
\label{section:preliminaries}

\subsection{Pseudo-deterministic algorithms}

We defined pseudo-deterministic algorithm to be randomized algorithm which for a
given input, generates a canonical output with probability at least
$\frac{1}{2}$. In this paper, whenever the pseudo-deterministic algorithm
doesn't generate a canonical output, it just fails and doesn't give any valid
output. In such cases, we can just rerun the algorithm until we get some valid
output (which is bound to be canonical). Now the runtime of the algorithm will
be random, but the expected runtime will be (asymptotically) same as the
original runtime.

For all the pseudo-deterministic algorithms in this paper, we report the
expected run time in the above sense. These algorithms always generate a
canonical output, but the amount of time they take to do so is random.

\subsection{Finite Field primer}

In this subsection, we go over some basic facts about finite fields that will
be useful in later sections.

\subsubsection{Splitting field}

A polynomial $h(X) \in \K[X]$ may not factorize fully into linear factors over
the field $\K$. Suppose $\F$ is the smallest extension of $\K$ such  that $h(X)$
fully factorizes into linear factors over $\F$. In other words, there exists
$\alpha_1, \alpha_2, \dots \alpha_k \in \F$ such that,

\[
    h(X) = (X - \alpha_1) (X - \alpha_2) \cdots (X - \alpha_k)
\]

Then $\F$ is the called the splitting field of $h(X)$ over $\K$~\cite[Definition
1.90]{lidl-niederreiter-94}. Note that for any other extension of $\K$ that is a
proper subfield of $\F$, $h(X)$ will not fully factorize into linear factors.

\subsubsection{Structure of Finite Fields}

For every prime power $p^n$ ($p$ is prime), there exists a finite field of size
$p^n$ and all finite fields of size $p^n$ are isomorphic to each
other~\cite[Theorem 2.5]{lidl-niederreiter-94}.

\begin{theorem}[Existence and Uniqueness of Finite Fields]
    For every prime $p$ and every positive integer $n$ there exists a finite
    field with $p^n$ elements. Any finite field with $q=p^n$ elements is
    isomorphic to the splitting field of $X^q - X$ over $\Fp$.
\end{theorem}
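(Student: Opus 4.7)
The plan is to prove the two assertions separately, both by exploiting the polynomial $f(X) := X^q - X$ where $q = p^n$. For existence, I would explicitly construct the field as the root set of $f$ inside its splitting field; for uniqueness, I would show that any field with $q$ elements is forced to be such a splitting field, and then appeal to the standard fact that splitting fields are unique up to isomorphism.

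For existence, let $\F$ denote the splitting field of $f$ over $\F_p$ (which exists by general field theory) and set $S := \{\alpha \in \F : \alpha^q = \alpha\}$. The first step is a separability check: the formal derivative $f'(X) = q X^{q-1} - 1 = -1$ vanishes nowhere, so $\gcd(f, f') = 1$ and $f$ has exactly $q$ distinct roots in $\F$, giving $|S| = q$. The second step is showing $S$ is a subfield. The key tool is that in characteristic $p$ the Frobenius map $\alpha \mapsto \alpha^p$ is a ring homomorphism, so iterating yields $(\alpha + \beta)^q = \alpha^q + \beta^q$; combined with the obvious multiplicative identity $(\alpha\beta)^q = \alpha^q \beta^q$ and the routine checks that $0, 1 \in S$, that $S$ is closed under negation, and that $\alpha \in S \setminus \{0\}$ implies $\alpha^{-1} \in S$, this pins $S$ down as a subfield of order $q$. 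Since $\F$ is by definition the smallest field in which $f$ splits and $S$ already contains every root of $f$, we conclude $\F = S$, a field with $q$ elements.

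For uniqueness, let $\mathbb{E}$ be any field of size $q$. Its characteristic must be $p$ (as its additive group has order $p^n$), so $\mathbb{E}$ contains a copy of $\F_p$ as its prime subfield. The multiplicative group $\mathbb{E}^*$ has order $q - 1$, so Lagrange gives $\alpha^{q-1} = 1$ for every $\alpha \in \mathbb{E}^*$; multiplying by $\alpha$ (and handling $0$ separately) shows every element of $\mathbb{E}$ is a root of $f$. Since $f$ has at most $q$ roots in any field and $\mathbb{E}$ supplies $q$ of them, $\mathbb{E}$ is precisely the set of roots of $f$, hence a splitting field of $f$ over $\F_p$. The standard isomorphism-extension theorem for splitting fields then yields $\mathbb{E} \cong \F$.

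The only non-bookkeeping step is the Frobenius/separability argument in the existence half; everything else is a direct application of Lagrange's theorem and the uniqueness of splitting fields, both of which I would simply cite from \cite{lidl-niederreiter-94}.
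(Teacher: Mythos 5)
Your proof is correct and is the standard argument for this classical theorem. The paper itself does not prove this statement --- it simply cites it as Theorem~2.5 of Lidl--Niederreiter --- and your argument (separability of $X^q - X$ via the derivative, Frobenius to show the root set is a subfield, Lagrange in $\mathbb{E}^*$ to force every element of a $q$-element field to be a root, then uniqueness of splitting fields) is essentially the proof given in that reference.
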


Thus, elements of $\F_{p^n}$ are roots of $X^{p^n} - X$. From this, we get the
following generalization of Fermat's little theorem for finite fields:

\begin{theorem}[Fermat's little theorm for finite fields]
    \label{theorem:fermat-finite-field}
    If $\alpha \in \F_{p^n}$, then $\alpha^{p^n} = \alpha$. Conversely, if
    $\alpha$ is in some finite field and $\alpha^{p^n} = \alpha$, then
    $\alpha \in \F_{p^n}$.
\end{theorem}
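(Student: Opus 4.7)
The plan is to combine the previously stated existence/uniqueness theorem for finite fields with Lagrange's theorem on the multiplicative group $\F_{p^n}^*$.

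For the forward direction, suppose $\alpha \in \F_{p^n}$. If $\alpha = 0$, the equation $\alpha^{p^n} = \alpha$ is trivial. Otherwise $\alpha$ lies in the multiplicative group $\F_{p^n}^*$, which has order $p^n - 1$, so Lagrange's theorem gives $\alpha^{p^n - 1} = 1$; multiplying both sides by $\alpha$ yields $\alpha^{p^n} = \alpha$, as desired.

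For the converse, the key observation is that the polynomial $f(X) := X^{p^n} - X$ has degree $p^n$ and therefore has at most $p^n$ roots in any extension field. The forward direction already supplies $p^n$ distinct roots (namely all $p^n$ elements of $\F_{p^n}$, which are distinct since $\abs{\F_{p^n}} = p^n$), and hence these are exactly all the roots of $f$. Now suppose $\alpha$ lies in some finite field of characteristic $p$ with $\alpha^{p^n} = \alpha$; embed both this field and $\F_{p^n}$ into a common algebraic closure $\overline{\F_p}$. Inside $\overline{\F_p}$, $\alpha$ is a root of $f$, so by the counting argument above, $\alpha$ must coincide with one of the $p^n$ elements of $\F_{p^n} \subset \overline{\F_p}$, proving the converse.

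I do not foresee any real obstacle — the proof is essentially a bookkeeping exercise once the existence/uniqueness theorem is in hand. The only subtle point is interpreting the phrase ``$\alpha$ is in some finite field'' in the converse: the intended reading is $\alpha \in \F_{p^m}$ for some $m \ge 1$, so that the embedding into the common algebraic closure $\overline{\F_p}$ is well defined and the root-counting comparison makes sense.
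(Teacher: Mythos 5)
Your proof is correct. The paper does not actually supply an explicit proof of this theorem---it simply states it as an immediate consequence of the preceding Existence and Uniqueness theorem, via the one-line remark that the elements of $\F_{p^n}$ are exactly the roots of $X^{p^n}-X$. Your write-up fills in precisely the details that remark is hiding: you derive the forward direction from Lagrange's theorem on $\F_{p^n}^*$ (rather than from the splitting-field characterization, but the two are interchangeable and equally standard), and you obtain the converse by the root-counting argument that a degree-$p^n$ polynomial has at most $p^n$ roots in an algebraic closure, which are therefore exhausted by the elements of $\F_{p^n}$. Your caveat about interpreting ``some finite field'' as a finite extension of $\F_p$ is well taken; the hypothesis $\alpha^{p^n}=\alpha$ with the exponent a power of $p$ is only meaningful in characteristic $p$, and the embedding into $\overline{\F_p}$ is needed to make the comparison. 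In short: same underlying idea as the paper, with the bookkeeping spelled out.
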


The below theorem gives the necessary and sufficient condition for a finite
field $\F_{p^m}$ to be a subfield of another finite field
$\F_{p^n}$~\cite[Theorem 2.6]{lidl-niederreiter-94}.

\begin{theorem}[Subfield Criterion]
    \label{theorem:subfield-criterion}
    Let $\Fq$ be a finite field with $q=p^n$ elements. Then every subfield of
    $\Fq$ has order $p^m$, where $m$ is the positive divisor of $n$. Conversely,
    if $m$ is the positive divisor of $n$, then there is exactly one subfield of
    $\Fq$ with $p^m$ elements.
\end{theorem}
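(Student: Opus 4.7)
The plan is to prove the two directions separately, both leveraging Fermat's little theorem for finite fields (\cref{theorem:fermat-finite-field}) and basic linear algebra.

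For the forward direction, I would take an arbitrary subfield $\K \subseteq \Fq$. Since $\K$ contains the prime subfield $\Fp$ (as $1 \in \K$ and $\K$ is closed under addition), $\K$ is an $\Fp$-vector space, so $\abs{\K} = p^m$ for some positive integer $m$. Then $\Fq$ is a finite-dimensional $\K$-vector space of some dimension $r$, whence $p^n = \abs{\Fq} = \abs{\K}^r = p^{mr}$. This forces $n = mr$, so $m$ divides $n$.

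For the converse, fix a divisor $m$ of $n$ and consider the set $S = \setdef{\alpha \in \Fq}{\alpha^{p^m} = \alpha}$, i.e., the roots inside $\Fq$ of $g(X) = X^{p^m} - X$. First I would verify that $S$ is a subfield: closure under the field operations follows from the Frobenius-type identity $(\alpha+\beta)^{p^m} = \alpha^{p^m} + \beta^{p^m}$ and multiplicativity. Next, to count $\abs{S}$, I would use the elementary fact that $m \mid n$ implies $p^m - 1 \mid p^n - 1$, and hence $X^{p^m - 1} - 1 \mid X^{p^n - 1} - 1$; multiplying by $X$ gives $g(X) \mid X^{p^n} - X$. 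Since every element of $\Fq$ is a root of $X^{p^n} - X$ (by \cref{theorem:fermat-finite-field}) and $X^{p^n} - X$ splits into distinct linear factors over $\Fq$, so does $g(X)$, and $\abs{S} = \deg g = p^m$. Thus $S$ is a subfield of order $p^m$.

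Uniqueness is the cleanest step: if $\K'$ is any subfield of $\Fq$ with $\abs{\K'} = p^m$, then every $\alpha \in \K'$ satisfies $\alpha^{p^m} = \alpha$ by the converse part of \cref{theorem:fermat-finite-field} applied inside $\K'$, so $\K' \subseteq S$; since both sets have the same cardinality $p^m$, we get $\K' = S$.

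The only mildly subtle point, and what I would flag as the main obstacle, is the divisibility chain $m \mid n \Rightarrow p^m - 1 \mid p^n - 1 \Rightarrow X^{p^m} - X \mid X^{p^n} - X$; everything else is routine. Writing $n = mr$, the first implication comes from $p^n - 1 = (p^m)^r - 1 = (p^m - 1)\bigl((p^m)^{r-1} + \cdots + 1\bigr)$, and the second from the analogous polynomial identity $Y^r - 1 = (Y - 1)(Y^{r-1} + \cdots + 1)$ with $Y = X^{p^m - 1}$. Once this is in hand, the two directions fit together immediately to give the theorem.
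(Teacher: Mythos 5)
The paper does not actually prove this theorem: it is stated as a cited fact (Theorem 2.6 of Lidl and Niederreiter), so there is no in-paper proof to compare your argument against. That said, your proof is correct and is the standard textbook argument: the two nested vector-space dimension counts give $p^n = (p^m)^r$ and hence $m \mid n$ for the forward direction; the fixed set of the $m$-fold Frobenius, $S = \setdef{\alpha \in \Fq}{\alpha^{p^m} = \alpha}$, is a subfield whose size is pinned down to exactly $p^m$ by the divisibility $X^{p^m}-X \mid X^{p^n}-X$ together with the fact that $X^{p^n}-X$ splits into distinct linear factors over $\Fq$; and uniqueness follows because any subfield of order $p^m$ sits inside $S$ by \cref{theorem:fermat-finite-field} and then must equal it by cardinality. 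The only blemish is a notational slip at the very end: the polynomial identity you invoke for the second divisibility step should be $Y^s - 1 = (Y-1)(Y^{s-1}+\cdots+1)$ with $s = (p^n-1)/(p^m-1)$, not with exponent $r = n/m$. The underlying chain $m \mid n \Rightarrow p^m - 1 \mid p^n - 1 \Rightarrow X^{p^m}-X \mid X^{p^n}-X$ is right, so this does not affect correctness.
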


From \cref{theorem:fermat-finite-field} and \cref{theorem:subfield-criterion},
we get the following useful lemma:

\begin{lemma}
    \label{lemma:smallest-k-subfield}
    Suppose $\alpha$ is some finite field element. Let $k$ be the
    smallest integer greater than 0 such that $\alpha^{p^k} = \alpha$. Then,
    $\F_{p^k}$ is the smallest extension of $\Fp$ that contains $\alpha$. In
    other words, $\alpha \in \F_{p^k}$ and for all $1 \leq k' < k$, $\alpha
    \notin \F_{p^{k'}}$.  
\end{lemma}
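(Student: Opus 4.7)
The plan is to prove both halves of the ``in other words'' clause using the two directions of \cref{theorem:fermat-finite-field}. Everything reduces to an exponent-arithmetic argument against the minimality of $k$.

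First, for the containment $\alpha \in \F_{p^k}$, I would apply the converse direction of \cref{theorem:fermat-finite-field}: by hypothesis $\alpha^{p^k} = \alpha$, and since $\alpha$ lives in some finite field, this is exactly the criterion for membership in $\F_{p^k}$.

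Next, for the minimality claim, I would argue by contradiction: suppose $\alpha \in \F_{p^{k'}}$ for some $1 \le k' < k$. By the forward direction of \cref{theorem:fermat-finite-field}, this forces $\alpha^{p^{k'}} = \alpha$, contradicting the fact that $k$ was chosen to be the \emph{smallest} positive integer with this property.

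Finally, to justify phrasing this as ``$\F_{p^k}$ is the smallest extension of $\Fp$ containing $\alpha$'', I would note that every finite extension of $\Fp$ has the form $\F_{p^m}$, and invoke \cref{theorem:subfield-criterion}: if $\F_{p^m}$ is any extension containing $\alpha$, then writing $m = jk + r$ with $0 \le r < k$ and iterating $\alpha^{p^k} = \alpha$ yields $\alpha = \alpha^{p^m} = \alpha^{p^r}$, so minimality of $k$ forces $r = 0$, i.e., $k \mid m$, and hence $\F_{p^k} \subseteq \F_{p^m}$. There is no real obstacle here; the only subtle step is the division-algorithm trick to promote ``not contained in any $\F_{p^{k'}}$ with $k' < k$'' to the genuine ``smallest extension'' statement, and that is a standard one-liner.
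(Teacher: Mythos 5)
Your proof is correct and expands exactly the argument the paper implies: the paper states this lemma with no explicit proof, merely noting that it follows from \cref{theorem:fermat-finite-field} and \cref{theorem:subfield-criterion}, and your use of both directions of the finite-field Fermat theorem together with the subfield criterion is precisely that expansion. The division-algorithm step you add to show $k \mid m$ for any $\F_{p^m}$ containing $\alpha$ is a nice extra that upgrades the paper's cardinality-minimality phrasing to genuine lattice-minimality, though it goes slightly beyond what the lemma's ``in other words'' clause actually asserts.
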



\subsubsection{Conjugates and Minimal polynomial}
\label{section:minimal-poly-conjugates}

Let $f(X)$ be an irreducible polynomial of degree $n$ over $\Fq$ ($q$ is prime
power). Then, $f(X)$ has some root $\alpha \in \F_{q^n}$. Also, the elements
$\alpha, \alpha^q, \alpha^{q^2}, \dots, \alpha^{q^{n-1}}$ are all distinct and
are the roots of $f(X)$~\cite[Theorem 2.14]{lidl-niederreiter-94}.

\[
    f(X) = (X - \alpha) (X - \alpha^q) (X - \alpha^{q^2}) \cdots (X - \alpha^{q^{n-1}})
\]

The splitting field of $f(X)$ with respect to $\Fq$ is
$\F_{q^n}$~\cite[Corollary 2.15]{lidl-niederreiter-94}. The minimal polynomial
of $\alpha$ over $\Fq$ is $f(X)$.

Above, the roots of $f(X)$ are all of the form $\alpha^{q^i}$. We will call such
elements conjugates $\alpha$ with respect to $\Fq$:

\begin{definition}
    Let $\F_{q^n}$ be an extension of $\Fq$ and let $\beta \in \F_{q^n}$.
    Then, $\beta, \beta^q, \beta^{q^2}, \dots, \beta^{q^{n-1}}$
    are called the conjugates of $\beta$ with respect to $\Fq$.
\end{definition}

In the following sections, we will be using the below lemma to show that certain
polynomials are irreducible.

\begin{lemma}[Minimal polynomial of $\beta \in \F_{q^n}$]
    \label{lemma:minimal-poly-of-conjugate}
    Suppose $\beta \in \F_{q^n}$ and conjugates of $\beta$ with respect to $\Fq$
    are all distinct. Then the minimal polynomial of $\beta$ over $\Fq$
    has degree $n$ and is of the form:

    \[
        g(X) = (X - \beta) (X - \beta^q) (X - \beta^{q^2}) \cdots (X - \beta^{q^{n-1}})
    \]

    Thus, $g(X) \in \Fq[X]$ is an irreducible polynomial.
\end{lemma}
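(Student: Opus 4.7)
The plan is to use the Frobenius automorphism $\phi : x \mapsto x^q$ of $\F_{q^n}$, whose fixed field is exactly $\Fq$ by \cref{theorem:fermat-finite-field}. The argument has three parts: show that $g(X)$ actually has coefficients in $\Fq$; show that $g(X)$ is the minimal polynomial of $\beta$ over $\Fq$; and then use that minimal polynomials are irreducible.

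First I would verify that $g(X) \in \Fq[X]$. Since $\beta \in \F_{q^n}$, \cref{theorem:fermat-finite-field} gives $\beta^{q^n} = \beta$, so the Frobenius map cyclically permutes the roots $\beta, \beta^q, \ldots, \beta^{q^{n-1}}$ of $g$. Applying $\phi$ coefficientwise to $g(X)$ therefore produces $\prod_{i=0}^{n-1}(X - \beta^{q^{i+1}})$, which is the same polynomial $g(X)$ after reindexing. Hence every coefficient of $g$ is fixed by $\phi$, so it lies in $\Fq$. By hypothesis the conjugates are pairwise distinct, so $g$ has no repeated roots and $\deg g = n$.

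Next I would identify $g(X)$ with the minimal polynomial $m(X)$ of $\beta$ over $\Fq$. Since $g(\beta) = 0$ and $g \in \Fq[X]$, we have $m \mid g$ in $\Fq[X]$. For the reverse direction, I would show that every conjugate $\beta^{q^i}$ is also a root of $m$: writing $m(X) = \sum a_j X^j$ with $a_j \in \Fq$, and using $a_j^q = a_j$, the equation $m(\beta) = 0$ raised to the $q$-th power yields $\sum a_j (\beta^q)^j = m(\beta^q) = 0$. Iterating this gives $m(\beta^{q^i}) = 0$ for every $i$. Thus $m$ has at least $n$ distinct roots, so $\deg m \geq n = \deg g$, forcing $m = g$.

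Finally, since the minimal polynomial of any element over a field is irreducible, $g(X)$ is irreducible in $\Fq[X]$ of degree $n$, completing the proof. I do not anticipate a serious obstacle: the only point requiring care is keeping track of why $\phi$-invariance of the coefficients is equivalent to their lying in $\Fq$, which is immediate from \cref{theorem:fermat-finite-field} applied to the coefficients themselves (they lie in $\Fq$ iff they are fixed by $\phi$).
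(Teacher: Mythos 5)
Your proof is correct and uses the same core idea as the paper — the Frobenius $x\mapsto x^q$ fixes $\Fq$ pointwise, permutes the conjugates cyclically, and hence all conjugates are roots of the minimal polynomial — but you organize it in the opposite direction. The paper starts from the minimal polynomial $g$ of $\beta$ (so $g\in\Fq[X]$ and irreducibility are automatic), shows all $n$ distinct conjugates are roots of $g$ to get $\deg g\geq n$, invokes $\beta\in\F_{q^n}$ to get $\deg g\leq n$, and only then concludes that $g$ equals the displayed product. You instead start from the explicit product, verify $g\in\Fq[X]$ by Frobenius-invariance of its coefficients, and then pin down $g$ as the minimal polynomial via the two divisibility inequalities $m\mid g$ and $\deg m\geq n$. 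Your route is a bit more self-contained: it replaces the paper's slightly terse appeal to ``$\beta\in\F_{q^n}$ implies $\deg g\leq n$'' (which tacitly uses $\deg m = \DegreeOfExtension{\Fq(\beta)}{\Fq}\mid n$) with the direct observation that $m$ divides a degree-$n$ polynomial over $\Fq$ vanishing at $\beta$. Both are standard and fine; no gap in either.
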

\begin{proof}
    The minimal polynomial $g(X)$ of $\beta$ over $\Fq$ is the smallest degree
    polynomial in $\Fq[X]$ such that $g(\beta) = 0$. Since, $g(\beta^{q^i}) =
    g(\beta)^{q^i} = 0$, all conjugates of $\beta$ are roots of $g(X)$. Hence,
    degree of $g(X)$ is at least $n$ (since the conjugates are all distinct).
    Also since $\beta \in \F_{q^n}$, degree of $g(X)$ is at most $n$. Thus, the
    degree of $g(X)$ is $n$.

    Thus, $g(X) = (X - \beta) (X - \beta^q) (X - \beta^{q^2}) \cdots (X -
    \beta^{q^{n-1}})$. Since $g(X)$ is a minimal polynomial of $\beta$ over
    $\Fq$, it will be in $\Fq[X]$ and is irreducible.
\end{proof}

\subsubsection{Representing finite field elements}
\label{section:represent-finite-field-elements}

Throughout the paper, we assume that extension fields $\F_{p^k}$ are given to us
as $\Fp[X]/(f(X))$, where $f(X)$ is an irreducible polynomial of degree $k$ over
$\Fp$ (refer~\cite{lenstra-91} for working with other representations). Each
element in $\Fp[X]/(f(X))$ can be viewed as a polynomial with degree at most $k$
over $\Fp$. The coefficient vectors of these polynomials are in $\Fp^k$. This
gives a natural isomorphism $\Phi: \F_{p^k} \rightarrow \Fp^k$. In $\Fp^k$, we
can order elements in lexicographic order in the natural sense.

\begin{definition}
    We say that $\alpha \in \F_{p^k}$ is lexicographically smaller than $\beta
    \in \F_{p^k}$, if $\Phi(\alpha) \in \Fp^k$ is lexicographically smaller than
    $\Phi(\beta) \in \Fp^k$.
\end{definition}

In the above definition, we compare the coordinates of $\Phi(\alpha)$ and
$\Phi(\beta)$ by fixing some ordering on elements on $\Fp$ (for e.g., we can
consider the natural ordering one gets from the additive group structure of
$\Fp$). Checking if $\Phi(\alpha)$ is lexicographically smaller than
$\Phi(\beta)$ requires $k$ comparisons, with each comparison taking $O(\log{p})$
time. Thus, overall it takes $O(k\log{p})$ time to check if $\Phi(\alpha)$ is
lexicographically smaller than $\Phi(\beta)$. 

Similarly, we can define a lexicographic ordering on polynomials over
$\F_{p^k}$.

\begin{definition}
    Suppose we are given two polynomials $g(X)$ and $h(X)$ of degree $d$ over
    $\F_{p^k}$. Then we say that $g(X)$ is lexicographically smaller than $h(X)$
    if the coefficient vector of $g(X)$ is lexicographically smaller than
    coefficient vector of $h(X)$ (the coefficients are compared using $\Phi$).    
\end{definition}

Checking if $g(X)$ is lexicographically smaller than $h(X)$ requires $d+1$
comparisons, with each comparison taking $O(k\log{p})$ time. Thus, overall it
takes $O(dk\log{p})$ time to check if $g(X)$ is lexicographically smaller than
$h(X)$.

\begin{lemma}[Picking lexicographically smallest polynomial]
    \label{lemma:lex-smallest-poly-find}
    Suppose we are given $n$ polynomials $f_1(X), f_2(X), \dots, f_n(X)$ of
    degree $d$ over $\F_{p^k}$. Then, there is an algorithm that outputs the
    lexicographically smallest polynomial among them in $O(ndk\log{p})$ time.
\end{lemma}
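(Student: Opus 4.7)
The plan is to use a straightforward linear scan that maintains a running minimum. First I would initialize a variable $g(X) \gets f_1(X)$ to hold the lexicographically smallest polynomial seen so far. Then I would iterate $i$ from $2$ to $n$, and at each step invoke the comparison procedure discussed just before the lemma statement to check whether $f_i(X)$ is lexicographically smaller than $g(X)$; if so, I update $g(X) \gets f_i(X)$. After the loop finishes, $g(X)$ is the lexicographically smallest among $f_1(X), \dots, f_n(X)$, and I output it.

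For the running time, correctness of the minimum-finding template is immediate by induction on $i$ (after iteration $i$, the variable $g(X)$ equals the lexicographically smallest among $f_1(X), \dots, f_i(X)$), so the real content is just accounting. Each comparison between two degree-$d$ polynomials over $\F_{p^k}$ takes $O(dk\log p)$ time, as noted in the paragraph immediately preceding the lemma. Since the algorithm performs exactly $n-1$ such comparisons (and the updates are pointer reassignments, which are cheaper), the total cost is $(n-1) \cdot O(dk\log p) = O(ndk\log p)$, matching the bound claimed in the lemma.

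There is no real obstacle here; the lemma is a routine application of the folklore linear scan for finding a minimum, and the only substantive ingredient is the per-comparison cost already established. I would keep the written proof short, essentially citing the comparison cost and noting that $n-1$ comparisons suffice.
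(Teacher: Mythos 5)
Your proof is correct and matches the paper's argument: both perform a single linear scan that maintains the lexicographically smallest polynomial seen so far and charge each of the $O(n)$ comparisons at the previously established $O(dk\log p)$ cost. The only difference is cosmetic (you count $n-1$ comparisons where the paper says ``at most $n$''), which does not affect the bound.
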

\begin{proof}
    We go over each polynomial $f_i(X)$ one by one, checking if $f_i(X)$ is
    lexicographically smaller than the lexicographically smallest polynomial we
    have seen so far. Since each comparison takes $O(dk\log{p})$ time, and we do
    at most $n$ comparisons, the algorithm runs in $O(ndk\log{p})$ time.
\end{proof}



\subsection{Equal degree polynomial factorization}

Shoup~\cite{shoup-88} reduced constructing irreducible polynomials to factoring
polynomials over finite field. It turns out that the reduction factors
polynomials whose irreducible factors all have same degree. Hence, equal degree
factorization is a crucial sub-routine for constructing irreducible polynomials.
There are several fast randomized equal degree factorization algorithms, and
below we mention one of them:

\begin{theorem}[Equal degree factorization]
    \label{theorem:equal-degree-factorization}
    Suppose $f(X)$ is a polynomial of degree $d$ over $\Fq$ ($q$ is prime power)
    which factors into irreducible polynomials of equal degree. Then, the equal
    degree factorization algorithm by von zur Gathen \&
    Shoup~\cite{von-zur-gathen-shoup-92} factors $f(X)$ in expected time
    $\tilde{O}(d\log^2{q})$.
\end{theorem}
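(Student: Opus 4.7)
The plan is to prove this via a Cantor--Zassenhaus style randomized splitting, with the speedup coming from fast Frobenius computation. Let $r$ be the common degree of the irreducible factors $f_1(X),\dots,f_m(X)$ of $f(X)$, so $d = mr$. The Chinese Remainder Theorem yields the isomorphism $\Fq[X]/(f(X)) \cong \prod_{i=1}^m \Fq[X]/(f_i(X)) \cong (\F_{q^r})^m$. Assume $q$ is odd (the characteristic-$2$ case uses the additive trace map in place of the $(q^r-1)/2$-power). Since $\F_{q^r}^{*}$ is cyclic of order $q^r-1$, for any unit $a \bmod f$ the element $a^{(q^r-1)/2}$ corresponds under the isomorphism to a tuple in $\{+1,-1\}^m$.

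The splitting procedure is to sample $a(X)$ uniformly at random of degree less than $d$; check that $\gcd(a,f)=1$ (otherwise return the gcd); compute $b(X) = a(X)^{(q^r-1)/2} \bmod f(X)$; and output $\gcd(b(X)-1, f(X))$. Under the CRT picture, each of the $m$ coordinates of $b$ is independently $\pm 1$ uniformly, so for $m \ge 2$ this gcd is a proper non-trivial factor of $f$ with probability at least $1/2$. One recurses on both sides; the case $m=1$ (i.e.\ $f$ already irreducible) can be detected separately by a Rabin-style test on $f$ before invoking the splitter.

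The principal difficulty, and the real technical content one is borrowing from von zur Gathen and Shoup, is achieving the claimed $\tilde{O}(d\log^2 q)$ expected time for the exponentiation. A direct repeated-squaring computation of $a^{(q^r-1)/2} \bmod f$ uses $O(r\log q)$ multiplications modulo $f$, each of cost $\tilde{O}(d)$, giving $\tilde{O}(rd\log q)$; in the worst case $r = \Theta(d)$, this blows up to $\tilde{O}(d^2\log q)$. The improvement is to compute $X^q \bmod f(X)$ once by repeated squaring in $\tilde{O}(d\log q)$ and then evaluate the iterated Frobenius $\alpha \mapsto \alpha^q \bmod f$ via fast modular composition. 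A baby-step / giant-step split of the exponent $(q^r-1)/2$ into Frobenius powers balances the number of modular compositions against the per-composition cost, so that one splitting attempt runs in $\tilde{O}(d\log^2 q)$.

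The correctness of each splitting step is immediate from the CRT calculation above, and the expected number of random trials at every node of the recursion tree is $O(1)$. Summing over the $O(\log d)$ levels of the recursion loses only logarithmic factors in $d$, which are absorbed into $\tilde{O}$. The hard part of the plan is exactly the modular-composition / Frobenius subroutine that decouples the exponentiation cost from $r$; every other ingredient is a routine application of CRT and a standard Cantor--Zassenhaus probabilistic analysis.
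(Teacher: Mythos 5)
The paper never proves this statement; it is cited from von zur Gathen and Shoup~\cite{von-zur-gathen-shoup-92} as a black-box subroutine, so there is no in-paper argument to compare your proposal against. Your sketch is a broadly correct reconstruction of the algorithm's design: the CRT decomposition of $\Fq[X]/(f)$ into $m$ copies of $\F_{q^r}$, the observation that $a\mapsto a^{(q^r-1)/2}$ maps a random unit to a uniform element of $\{\pm 1\}^m$, the at-least-$1/2$ splitting probability when $m\ge 2$, the trace-map substitute in characteristic two, and the use of iterated Frobenius via modular composition to escape the $\tilde{O}(d^2\log q)$ cost of naive exponentiation when $r=\Theta(d)$.

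Where the proposal falls short is precisely the claimed bit-complexity $\tilde{O}(d\log^2 q)$, and this is worth naming concretely. The 1992 von zur Gathen--Shoup construction carries out its modular compositions with the Brent--Kung algorithm, whose cost is on the order of $d^{(\omega+1)/2}$ operations in $\Fq$ per composition; the resulting equal-degree-factorization bound therefore has a superlinear term in $d$ and does not by itself yield $\tilde{O}(d\log^2 q)$ bit operations. To push a modular composition of degree-$d$ polynomials over $\Fq$ down to $d^{1+o(1)}\log^{1+o(1)} q$ bit operations requires the Kedlaya--Umans algorithm (which the paper already invokes for the Rabin irreducibility test, but does not explicitly fold into this theorem's statement). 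Your phrase ``balances the number of modular compositions against the per-composition cost'' does not on its own get the per-composition cost to quasi-linear, which is what the theorem as literally stated requires. You acknowledge deferring the Frobenius subroutine, which is honest, but a fully correct proof should invoke Kedlaya--Umans at the composition step or else settle for a weaker bound with a $d^{(\omega+1)/2}$ term.
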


\section{Construction of irreducible polynomials over extension fields $\F_{p^k}$}
\label{section:construction-over-extension-field}

We first show in \cref{algo:construct-irred-over-extension-field} that
constructing irreducible polynomials over extension fields $\F_{p^k}$ can be
reduced to constructing irreducible polynomials over $\Fp$ ($p$ is prime).
\cref{theorem:construct-irred-over-extension-field} shows the correctness and
running time of \cref{algo:construct-irred-over-extension-field}.

\begin{algorithm}[H]
    \caption{Pseudo-deterministic construction of irreducible polynomials over $\F_{p^k}$}
    \label{algo:construct-irred-over-extension-field}
    \hspace*{\algorithmicindent} \textbf{Input:} Degree $d$     \\
    \hspace*{\algorithmicindent} \textbf{Output:} Irreducible polynomial of degree $d$ over $\F_{p^k}$
    \begin{algorithmic}[1]
        \State Pseudo-deterministically construct irreducible polynomial $f(X)$ over $\Fp$ of degree $dk$.

        \State Factor $f(X) = \prod_{i=0}^{k-1} f_i(X)$ over $\F_{p^k}$ using \cref{theorem:equal-degree-factorization}.
        
        \State Output the lexicographically smallest factor $f_i(X)$.
    \end{algorithmic}
\end{algorithm}

\begin{theorem}[Correctness and Running time of \cref{algo:construct-irred-over-extension-field}]
    \label{theorem:construct-irred-over-extension-field}
    Suppose there is a pseudo-deterministic algorithm for constructing
    irreducible polynomials of degree $l$ over $\Fp$ ($p$ prime), that runs in
    expected time $T(l, p)$. Then \cref{algo:construct-irred-over-extension-field}
    pseudo-deterministically constructs irreducible polynomials of degree $d$
    over extension field $\F_{p^{k}}$ in expected time $T(dk, p) +
    \tilde{O}(dk^3\log{p})$.
\end{theorem}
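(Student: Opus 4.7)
The plan is to establish two facts: (i) the degree-$dk$ irreducible polynomial $f(X) \in \Fp[X]$ produced in step~1 factors over $\F_{p^k}$ as a product of exactly $k$ irreducibles of equal degree $d$, so that \cref{theorem:equal-degree-factorization} applies; and (ii) selecting the lexicographically smallest factor gives a canonical output, making the overall algorithm pseudo-deterministic.

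For (i), fix any root $\alpha \in \F_{p^{dk}}$ of $f$. Since $f$ is irreducible of degree $dk$ over $\Fp$, the element $\alpha$ generates $\F_{p^{dk}}$ over $\Fp$, hence also over $\F_{p^k}$, so $\F_{p^k}(\alpha) = \F_{p^{dk}}$ and $[\F_{p^k}(\alpha) : \F_{p^k}] = d$. I claim the conjugates $\alpha, \alpha^{p^k}, \alpha^{p^{2k}}, \dots, \alpha^{p^{(d-1)k}}$ of $\alpha$ with respect to $\F_{p^k}$ are all distinct: if $\alpha^{p^{jk}} = \alpha$ for some $0 < j < d$, then by \cref{lemma:smallest-k-subfield} we would have $\alpha \in \F_{p^{jk}}$, forcing $\F_{p^{dk}} = \F_p(\alpha) \subseteq \F_{p^{jk}}$, which contradicts $j < d$. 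Applying \cref{lemma:minimal-poly-of-conjugate}, the minimal polynomial of $\alpha$ over $\F_{p^k}$ is a degree-$d$ irreducible factor of $f$. The same argument applies to every root of $f$, so the Frobenius $x \mapsto x^{p^k}$ partitions the $dk$ roots of $f$ into $k$ orbits of size $d$, yielding the factorization $f(X) = \prod_{i=0}^{k-1} f_i(X)$ into $k$ irreducible polynomials of degree $d$ over $\F_{p^k}$, as required.

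For (ii), whenever step~1 succeeds (which happens with probability at least $1/2$ by the pseudo-determinism of the underlying algorithm), it returns the canonical polynomial $f(X)$. Unique factorization in $\F_{p^k}[X]$ then determines the multiset $\{f_0, \dots, f_{k-1}\}$, and hence its lexicographically smallest element, as a function of $f$ alone. Consequently the output of \cref{algo:construct-irred-over-extension-field} is canonical with probability at least $1/2$.

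For the running time, step~1 takes expected time $T(dk, p)$ by assumption. Step~2 invokes \cref{theorem:equal-degree-factorization} on a degree-$dk$ polynomial over $\F_{p^k}$, giving expected time $\tilde{O}\bigl(dk \log^2(p^k)\bigr) = \tilde{O}(dk^3 \log p)$ in the paper's $\tilde{O}$ convention. Step~3 selects the lex-smallest among $k$ polynomials of degree $d$ over $\F_{p^k}$, which by \cref{lemma:lex-smallest-poly-find} costs $O(dk^2 \log p)$ and is absorbed into the previous bound. Summing gives the claimed total $T(dk, p) + \tilde{O}(dk^3 \log p)$. The main conceptual step is the Galois/Frobenius orbit argument establishing equal-degree factorization; once that is in hand, everything else is bookkeeping against the lemmas already in \cref{section:preliminaries}.
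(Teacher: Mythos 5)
Your proof is correct and takes essentially the same approach as the paper: both establish that $f$ factors over $\F_{p^k}$ into $k$ irreducibles of degree $d$ via \cref{lemma:minimal-poly-of-conjugate}, then apply \cref{theorem:equal-degree-factorization} and \cref{lemma:lex-smallest-poly-find} with the same runtime accounting. The only difference is presentational --- the paper exhibits the factorization by explicitly rearranging $\prod_i (X - \alpha^{p^i})$ into $k$ blocks, while you argue abstractly via Frobenius orbits and the degree count $[\F_{p^k}(\alpha):\F_{p^k}]=d$ --- and where you invoke \cref{lemma:smallest-k-subfield} for the implication $\alpha^{p^{jk}}=\alpha \Rightarrow \alpha \in \F_{p^{jk}}$, the directly applicable statement is \cref{theorem:fermat-finite-field}.
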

\begin{proof}
    \cref{algo:construct-irred-over-extension-field} first constructs an
    irreducible polynomial $f(X)$ of degree $dk$ over $F_p$. Note that
    $\Fp[X]/(f(X))$ is isomorphic to $\F_{p^{dk}}$. Some $\alpha \in
    \F_{p^{dk}}$ will be a root of $f(X)$. The conjugates of $\alpha$ with
    respect to $\Fp$ are all distinct and are the roots of $f(X)$ (refer
    \cref{section:minimal-poly-conjugates}):

    \begin{align*}
        f(X) = (X - \alpha) (X - \alpha^{p}) (X - \alpha^{p^2}) \cdots (X - \alpha^{p^{dk-2}}) (X - \alpha^{p^{dk-1}})
    \end{align*}

    Rearranging the above terms, we get:

    \begin{align*}
        f(X)=   &   \Big[ (X - \alpha) (X - \alpha^{p^k}) (X - \alpha^{p^{2k}}) \cdots  (X-\alpha^{p^{(d-1)k}}) \Big]                \\
                &   \Big[ (X - \alpha^{p}) (X - \alpha^{p^{k+1}}) (X - \alpha^{p^{2k+1}}) \cdots  (X-\alpha^{p^{(d-1)k+1}}) \Big]    \\
                &   \Big[ (X - \alpha^{p^2}) (X - \alpha^{p^{k+2}}) (X - \alpha^{p^{2k+2}}) \cdots  (X-\alpha^{p^{(d-1)k+2}}) \Big]  \\
                &   \vdots   \\
                &   \Big[ (X - \alpha^{p^{(k-1)}}) (X - \alpha^{p^{k+(k-1)}}) (X - \alpha^{p^{2k+(k-1)}}) \cdots  (X-\alpha^{p^{(d-1)k+(k-1)}}) \Big]  \\
            =   &   \prod_{i=0}^{k-1} \prod_{j=0}^{d-1} (X - \alpha^{p^{jk + i}})   \\
            \coloneq  &   \prod_{i=0}^{k-1} f_i(X)
    \end{align*}

    Let $q = p^k$. $f_i(X) $ has degree $d$ and its roots are conjugates of
    $\alpha^{p^i} \in \F_{q^d}$ with respect to $\F_q$ (which are all distinct).
    Thus, from \cref{lemma:minimal-poly-of-conjugate}, $f_i(X) \in \F_q[X]$ is
    the minimal polynomial of $\alpha^{p^i}$ over $\Fq$, and hence $f_i(X)$ is
    irreducible over $\F_q$. So, we can use
    \cref{theorem:equal-degree-factorization} to factorize $f(X)$ over $\F_q$,
    obtaining all factors $f_i(X)$ of degree $d$. We then use
    \cref{lemma:lex-smallest-poly-find} to output the lexicographically smallest
    factor among $f_i(X)$. Let the lexicographically smallest factor be denoted
    by $f_{i^*}(X)$. Given a polynomial $f(X)$ of degree $dk$, $f_{i^*}(X)$ is
    canonical. Thus, the above construction is pseudo-deterministic.

    For the running time, it takes $T(dk, p)$ time to construct $f(X)$, and then
    $\tilde{O}(dk^3\log{p})$ time to factor $f(X)$ over field $\F_{p^k}$ (from
    \cref{theorem:equal-degree-factorization}). Finally, choosing $f_{i^*}(X)$
    among $f_i(X)$ can be computed in time ${O}(dk^2\log{p})$ (from
    \cref{lemma:lex-smallest-poly-find}). Thus, the overall running time of the
    algorithm is $T(dk, p) + \tilde{O}(dk^3\log{p})$.
\end{proof}

\section{Construction of irreducible polynomials over $\Fp$}
\label{section:construction-over-prime-field}

Shoup's algorithm reduces constructing irreducible polynomials over $\Fp$ to
finding $q$-th non residues in splitting field of $X^q - 1$, for all prime
divisors $q$ of $d$ (and $q \neq p$). For completeness, we reproduce the theorem
below and refer to Theorem 2.1 in~\cite{shoup-88} for it's proof.

\begin{theorem}[Reduction to finding $q$-th non residues] 
    \label{theorem:reduction-to-qth-non-residue}
    
    Assume that for each prime $q \mid d$, $q \neq p$, we are given a splitting
    field $\K$ of $X^q - 1$ over $\Fp$ and a $q$-th non residue in
    $\K$. Then we can find an irreducible polynomial over $\Fp$ of
    degree $d$ deterministically with $\tilde{O}(d^4 \log{p} + \log^2{p})$
    operations in $\Fp$.
\end{theorem}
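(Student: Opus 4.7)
The plan is to reduce the construction of a degree-$d$ irreducible over $\Fp$ to the construction of irreducibles of prime-power degree $q^{e}$ for each prime power $q^{e}$ exactly dividing $d$, and then glue these pieces together by a compositum / minimal-polynomial-of-a-sum trick. Concretely, writing $d = \prod_{i} q_{i}^{e_{i}}$, if I have irreducibles $g_{i}(X) \in \Fp[X]$ of pairwise coprime degrees $q_{i}^{e_{i}}$ with chosen roots $\alpha_{i} \in \overline{\Fp}$, then $\alpha = \sum_{i} \alpha_{i}$ generates $\F_{p^{d}}$ over $\Fp$, and its $\Fp$-minimal polynomial can be extracted by linear algebra inside a concretely built tower. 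So it suffices to construct an irreducible of degree $q^{e}$ for each prime power exactly dividing $d$.

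For the factor $q = p$ (if $p \mid d$) no non-residue is required: an Artin--Schreier tower starting from $X^{p} - X - 1$ produces $\F_{p^{p^{e}}}$ entirely deterministically. For $q \neq p$ I would invoke Kummer theory using the hypothesis. Given $\K = \F_{p^{r}}$, a splitting field of $X^{q} - 1$ over $\Fp$ (so $r = \operatorname{ord}_{q}(p)$), together with $\alpha \in \K^{*}$ not lying in $(\K^{*})^{q}$, the classical Capelli criterion says that $X^{q^{e}} - \alpha$ is irreducible over $\K$ (with a minor adjustment when $q=2$ and $e \geq 2$). A root $\beta$ then satisfies $\K(\beta) \cong \F_{p^{r q^{e}}}$, inside which $\F_{p^{q^{e}}}$ sits as a unique subfield since $q^{e}$ divides $r q^{e}$. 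I would pin down a generator of $\F_{p^{q^{e}}}$ over $\Fp$ as $\gamma = \operatorname{Tr}_{\F_{p^{r q^{e}}} / \F_{p^{q^{e}}}}(\beta)$, with a small deterministic additive shift if this trace happens to fall into a strict subfield, and then compute the $\Fp$-minimal polynomial of $\gamma$ by linear algebra in $\F_{p^{r q^{e}}}$.

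The main obstacle I anticipate is \emph{descent}: the Kummer tower lives naturally over $\K$, but the output must be an irreducible polynomial over $\Fp$. Proving that the chosen $\gamma$ really generates $\F_{p^{q^{e}}}$ over $\Fp$, and not some strict subfield, requires a Galois-theoretic argument that exploits $\gcd(r, q^{e}) = 1$ to separate the action of $\operatorname{Gal}(\K / \Fp)$ from the Kummer $q^{e}$-group action; this, together with the compatibility of combining the prime-power pieces into a single degree-$d$ minimal polynomial, is where the real content of the theorem should live. For the running time, arithmetic in $\F_{p^{r q^{e}}}$ costs $\tilde{O}(d \log p)$ per operation; the dominant cost is the minimal polynomial computation, which by naive linear algebra over $\Fp$ accounts for the $\tilde{O}(d^{4} \log p)$ term, while the additive $\log^{2} p$ absorbs a handful of auxiliary exponentiations (e.g.\ computing $r$ and the required non-residue exponents).
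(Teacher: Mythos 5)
The paper does not actually prove this theorem; it is quoted verbatim from Shoup and the proof is delegated to \cite{shoup-88}. Your high-level skeleton --- reduce to prime-power degrees $q^{e} \,\|\, d$, handle $q=p$ by an Artin--Schreier tower, handle $q \neq p$ by Kummer theory from the given non-residue, and glue the pieces by the coprime-degree ``sum of generators'' trick --- is the same skeleton Shoup uses, and the coprime-sum lemma you invoke is correct (a short argument shows that if $\alpha_1 + \cdots + \alpha_k \in \F_{p^{m}}$ with each $\alpha_i$ of degree $n_i$ over $\Fp$ and the $n_i$ pairwise coprime, then $n_i \mid m$ for every $i$).

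However, the two places you flag as ``where the real content should live'' are in fact unfilled gaps, not minor adjustments. First, the descent from $\K(\beta)=\F_{p^{rq^{e}}}$ down to a degree-$q^{e}$ irreducible over $\Fp$ is genuinely nontrivial and your proposed fix does not work as stated: $\gamma = \operatorname{Tr}_{\F_{p^{rq^{e}}}/\F_{p^{q^{e}}}}(\beta)$ can perfectly well land in a proper subfield (it can even be $0$), and an ``additive shift'' does not help because you do not yet have a handle on any elements of $\F_{p^{q^{e}}}$ other than $\Fp$, and shifting by $\Fp$-elements never changes the generated extension. What is actually needed is a separate descent lemma of the form: given a monic irreducible $f \in \F_{p^{s}}[X]$ of degree $m$ with $\gcd(s,m)=1$, deterministically produce a monic irreducible in $\Fp[X]$ of degree exactly $m$; this is a real piece of the argument in Shoup's paper (it is not the case that an arbitrary root of $f$ automatically generates $\F_{p^{m}}$ over $\Fp$ --- it generates $\F_{p^{mg}}$ for some $g \mid s$, and one must do more work to force $g=1$ or to extract the degree-$m$ piece). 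Second, for $q=2$ and $e \geq 2$ the Capelli condition $\alpha \notin -4\K^{4}$ is not implied by $\alpha$ being a quadratic non-residue: when $p \equiv 3 \pmod 4$ the coset $-4\Fp^{4}$ consists entirely of non-residues, so the given non-residue may well violate Capelli and $X^{2^{e}}-\alpha$ may factor. Handling this case requires adjoining $i$ and re-locating a suitable element, and cannot be brushed aside as a ``minor adjustment.'' So while your plan is aimed in the right direction, as written it is a sketch with the two central steps missing rather than a proof.
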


Shoup constructs the splitting field\footnote{Shoup constructs an irreducible
polynomial $g(X) \in \Fp[X]$ such that $\Fp/(g(X))$ is isomorphic to splitting
field of $X^q - 1$ over $\Fp$} $\K$ of $X^q - 1$ over $\Fp$ and a $q$-th non
residue in $\K$ by reducing to deterministic polynomial factorization. Since no
known efficient deterministic factoring algorithms are known, his algorithm is
not efficient for finite fields of large characteristic. In this section, we
will find a canonical splitting field $\K$ and a canonical $q$-th non residues
by using a fast randomized factoring algorithm. Thus, we obtain an efficient
algorithm for constructing irreducible polynomials over $\Fp$, but at the cost
of making the algorithm pseudo-deterministic.


For each prime $q \mid d, q \neq p$, we will pseudo-deterministically construct
a splitting field $\K$ of $X^q - 1$ over $\Fp$ and find a $q$-th non
residue in $\K$. To this end, we first analyze the factorization of $X^q
- 1 \in \Fp[X]$.

\begin{lemma}
    \label{lemma:factor-xq-1}
    Consider the polynomial $X^q - 1 \in \Fp[X]$ ($p, q$ are prime numbers). Let
    $k$ be the smallest integer greater than 0 such that $q \mid p^k - 1$ (in
    other words, $k$ is the order of $p \pmod q$). Then,

    \begin{enumerate}
        \item The splitting field of $X^q - 1$ over $\Fp$ is $\F_{p^k}$
        \item $X^q - 1 = (X - 1) g_1(X) g_2(X) \cdots g_{\frac{q-1}{k}}(X)$
        
                where $g_i(X) \in \Fp[X]$ are irreducible polynomials of degree $k$.
    \end{enumerate}
\end{lemma}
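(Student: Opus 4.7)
The plan is to first observe that $X^q - 1 \in \Fp[X]$ is separable since $\gcd(q, p) = 1$ (otherwise the order of $p \bmod q$ would not exist), so its formal derivative $qX^{q-1}$ is coprime with $X^q - 1$. Hence $X^q - 1$ has $q$ distinct roots in its splitting field, and these roots form a cyclic group of order $q$ (a finite subgroup of a field's multiplicative group is cyclic). Since $q$ is prime, every non-identity root is a primitive $q$-th root of unity; fix one such root $\zeta$, so all roots are $1, \zeta, \zeta^2, \ldots, \zeta^{q-1}$.

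For part (1), I would apply \cref{lemma:smallest-k-subfield} to $\zeta$: the smallest extension $\F_{p^m}$ containing $\zeta$ is determined by the smallest $m > 0$ with $\zeta^{p^m} = \zeta$, i.e., $\zeta^{p^m - 1} = 1$. Since $\zeta$ has order exactly $q$, this is equivalent to $q \mid p^m - 1$, so the minimal $m$ is $k$ by definition. Thus $\zeta \in \F_{p^k}$, and since all the other roots $\zeta^i$ are polynomial expressions in $\zeta$, the splitting field is contained in $\F_{p^k}$; conversely, it must contain $\F_{p^k}$ since it contains $\zeta$. So the splitting field is exactly $\F_{p^k}$.

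For part (2), I pull out the factor $(X - 1)$ corresponding to the root $1$, and then analyze the primitive $q$-th roots. For any $1 \le i \le q-1$, I would apply \cref{lemma:minimal-poly-of-conjugate} to $\beta = \zeta^i \in \F_{p^k}$; the key sub-step is checking that the conjugates $\zeta^i, \zeta^{ip}, \zeta^{ip^2}, \ldots, \zeta^{ip^{k-1}}$ are all distinct. If $\zeta^{ip^a} = \zeta^{ip^b}$ with $0 \le a < b \le k-1$, then $q \mid i(p^b - p^a) = ip^a(p^{b-a}-1)$; since $\gcd(i,q) = \gcd(p,q) = 1$, this forces $q \mid p^{b-a}-1$, contradicting the minimality of $k$. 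Hence the minimal polynomial of each $\zeta^i$ over $\Fp$ is an irreducible polynomial of degree exactly $k$ in $\Fp[X]$ whose roots form a Frobenius orbit of size $k$ inside $\{\zeta, \zeta^2, \ldots, \zeta^{q-1}\}$.

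Finally, I would observe that the Frobenius map $\alpha \mapsto \alpha^p$ partitions the $q-1$ primitive $q$-th roots of unity into disjoint orbits, each of size $k$ by the argument above, yielding exactly $(q-1)/k$ orbits (which also confirms $k \mid q-1$, consistent with Lagrange in $(\Z/q\Z)^*$). Each orbit is the root set of one irreducible factor $g_j(X) \in \Fp[X]$ of degree $k$, and multiplying $(X-1)$ by these $(q-1)/k$ monic irreducible factors recovers the full factorization of $X^q - 1$. The only subtle step is ruling out coincidences among conjugates, which I expect to be the main obstacle, and it is handled cleanly by the minimality of $k$ as the order of $p$ modulo $q$.
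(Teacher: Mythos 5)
Your proof is correct and follows essentially the same route as the paper: identify the splitting field via \cref{lemma:smallest-k-subfield} applied to a primitive $q$-th root of unity, then group the primitive roots into Frobenius orbits of size $k$ and apply \cref{lemma:minimal-poly-of-conjugate} to each orbit. The paper phrases the orbit decomposition as cosets of the subgroup $\langle p\rangle$ in $(\Z/q\Z)^*$, which is the same partition in different language; your version is slightly more explicit in verifying separability and the distinctness of conjugates, but the underlying argument is identical.
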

\begin{proof}
    
    Let $\K$ be the splitting field of $X^q - 1$ over $\Fp$. The roots of $X^q -
    1$ in $\K$ are by definition the $q$-th roots of unity. Suppose $\omega \in
    \K$ is some primitive $q$-th root of unity. Then, $\{ 1, \omega, \omega^2,
    \ldots, \omega^{q-1} \}$ are all the $q$-th roots of unity, and they form a
    multiplicative subgroup in $\K^*$. In fact, since $q$ is prime, each of $\{
    \omega, \omega^2, \ldots, \omega^{q-1}\}$ is a primitive $q$-th root of
    unity.

    Since $\omega^q = 1$, we have $\omega^{p^k} = \omega$ and hence from
    \cref{theorem:fermat-finite-field}, $\omega \in \F_{p^k}$. By definition of
    $k$, $k$ is the smallest integer greater than 0 such that $\omega^{p^k} =
    \omega$. So from \cref{lemma:smallest-k-subfield}, $\F_{p^k}$ is the
    smallest extension of $\Fp$ that contains $\omega$. $X^q-1$ splits linearly
    as:

    \[
        X^q - 1 = (X - 1) (X - \omega) (X - \omega^2) \cdots (X - \omega^{q-1})
    \]

    $\F_{p^k}$ is the smallest extension of $\Fp$ that contains all the
    roots of $X^q - 1$. Thus, $\F_{p^k}$ is the splitting field of $X^q - 1$.
    We next consider the factorization pattern of $X^q - 1$ over $\Fp$.

    Let $G$ be the multiplicative group of integers modulo $q$. Since $q$ is
    prime, elements of $G$ are $\{1, 2, \dots, q-1\}$. Consider the cyclic
    subgroup $H$ of $G$ generated by $p$. The elements of $H$ are $\{1, p, p^2,
    \dots, p^{k-1}\}$. The cosets of $H$ partition $G$. Let $a_1 H$, $a_2 H$,
    \dots, $a_{(q-1)/k} H$ be the $(q-1)/k$ cosets of $H$ that partition $G$.
    Then, $X^q - 1 \in \Fp[X]$ can be factorized as follows:

    \begin{align*}
        X^q - 1 &= (X - 1) (X - \omega) (X - \omega^2) \cdots (X - \omega^{q-1}) \\
                &= (X - 1) \prod_{i=1}^{(q-1)/k} \prod_{j \in a_i H} (X - \omega^j) \\
                &= (X - 1) \prod_{i=1}^{(q-1)/k} (X - \omega^{a_i}) (X - \omega^{a_i p}) (X - \omega^{a_i p^2}) \cdots (X - \omega^{a_i p^{k-1}}) \\
                &\coloneq (X - 1) \prod_{i=1}^{(q-1)/k} g_i(X)
    \end{align*}

    $g_i(X)$ has degree $k$ and its roots are conjugates of $\omega^{a_i} \in
    \F_{p^k}$ with respect to $\Fp$ (which are all distinct). Thus, from
    \cref{lemma:minimal-poly-of-conjugate}, $g_i(X) \in \Fp[X]$ is the minimal
    polynomial of $\omega^{a_i}$ over $\Fp$, and hence is irreducible over
    $\Fp$.

\end{proof}

Thus, the splitting field of $X^q - 1$ over $\Fp$ is $\F_{p^k}$. It is easy to
see that $\F_{p^k}$ contains a $q$-th non residue, since the map $\Phi: \alpha
\mapsto \alpha^q$ is not surjective in $\F_{p^k}$ (since for every $i$,
$\Phi(\omega^i) = 1$, where $\omega$ is some primitive $q$-th root of unity).

In order to get our hands on a canonical representation of $\F_{p^k}$, we can
factorize $(X^q - 1)/(X - 1) = X^{q-1} + X^{q-2} + \cdots + X + 1$ over $\Fp$
and pick the lexicographically smallest degree $k$ irreducible factor $h(X)$.
Then, $\Fp[X]/h(X)$ is isomorphic to $\F_{p^k}$. Let $\omega$ be an element in
$\F_{p^k}$ isomorphic to $X \in \Fp[X]/h(X)$. Next, to find a canonical $q$-th
non residue $\alpha \in \F_{p^k}$, we set $\alpha = \omega$ and repeatedly
perform $\alpha \gets \sqrt[q]{\alpha}$ (choosing the lexicographically smallest
$q$-th root) until $\alpha$ is a $q$-th non residue.
\cref{algo:construct-irred-mod-p} implements the above idea and constructs an
irreducible polynomial of degree $d$. In Line~\ref{line:factor-xq-1} and
Line~\ref{line:factor-xq-a}, factorization is done using
\cref{theorem:equal-degree-factorization}. We analyze the correctness and
running time of \cref{algo:construct-irred-mod-p} in
\cref{theorem:construct-irred-mod-p}.

\begin{algorithm}
    \caption{Pseudo-deterministic construction of irreducible polynomials over $\Fp$}
    \label{algo:construct-irred-mod-p}
    \hspace*{\algorithmicindent} \textbf{Input:} Degree $d$     \\
    \hspace*{\algorithmicindent} \textbf{Output:} Irreducible polynomial of degree $d$ over $\Fp$
    \begin{algorithmic}[1]
        \State Initialize arrays $H \gets [\ ], \Lambda \gets [\ ]$

        \For{prime $q \mid d, q \neq p$} \label{line:for-each-q}
            \State Factorize $X^{q-1} + X^{q-2} + \cdots + X + 1 = g_1(X) g_2(X) \cdots g_{\frac{q-1}{k}}(X)$ over $\Fp$ \label{line:factor-xq-1}
            \State $h(X) \gets $ lexicographically smallest degree $k$ factor among $g_1(X), g_2(X), \ldots, g_{\frac{q-1}{k}}(X)$ \label{line:choose-h}
            \State Field arithmetic over $\F_{p^k}$ will henceforth be performed over $\Fp[X]/h(X)$.
            \State $\alpha \gets$ element in $\F_{p^k}$ isomorphic to $X$ in $\Fp[X]/h(X)$
            \While{$\alpha$ is a $q$-th residue} \label{line:while-q-residue}
                \State Factorize $X^q - \alpha = (X - \beta_1) (X - \beta_2) \cdots (X - \beta_q)$ over $\F_{p^k}$ \label{line:factor-xq-a}
                \State $\alpha \gets$ lexicographically smallest element among $\beta_1, \beta_2, \ldots, \beta_q$ in $\F_{p^k}$
            \EndWhile
            \State Append $h(X)$ to array $H$ and $\alpha$ to array $\Lambda$
        \EndFor

        \State Using arrays $H$ and $\Lambda$ and
        \cref{theorem:reduction-to-qth-non-residue}, deterministically construct
        an irreducible polynomial of degree $d$ over $\Fp$. \label{line:reduction-to-shoup}
    \end{algorithmic}
\end{algorithm}

\begin{theorem}[Correctness and Runtime of \cref{algo:construct-irred-mod-p}]
    \label{theorem:construct-irred-mod-p}

    \cref{algo:construct-irred-mod-p} pseudo-deterministically constructs an
    irreducible polynomial of degree $d$ over $\Fp$ and runs in expected time
    $\tilde{O}(d^4 \log^3{p})$.
\end{theorem}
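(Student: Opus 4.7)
The plan is to verify three properties of \cref{algo:construct-irred-mod-p}: that each invocation of equal-degree factorization is well-posed, that the outer construction yields a canonical output, and that every loop terminates quickly. I would begin by fixing a prime $q \mid d$ with $q \ne p$ and applying \cref{lemma:factor-xq-1}: the polynomial $(X^q-1)/(X-1)$ factors over $\Fp$ into $(q-1)/k$ irreducible factors, each of degree $k = \operatorname{ord}_q(p)$. Hence \cref{theorem:equal-degree-factorization} is applicable on Line~\ref{line:factor-xq-1}, and \cref{lemma:lex-smallest-poly-find} produces a canonical irreducible $h(X)$, so $\Fp[X]/h(X)\cong \F_{p^k}$ is a canonical model of the splitting field. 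The initial $\alpha = \omega$ (the image of $X$) is a primitive $q$-th root of unity, so it is canonically determined by $h$.

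Next I would analyze the while loop. Since $\alpha$ enters the loop as a $q$-th residue in $\F_{p^k}$, all $q$ of its $q$-th roots lie in $\F_{p^k}$ (they differ by powers of $\omega\in\F_{p^k}$), so $X^q-\alpha$ splits into linear factors and \cref{theorem:equal-degree-factorization} once more applies on Line~\ref{line:factor-xq-a}. To bound the number of iterations, I would write $p^k-1 = q^a m$ with $\gcd(m,q)=1$ and track $v_q(\operatorname{ord}(\alpha))$. Initially $\operatorname{ord}(\omega)=q$, so $v_q=1$. If $\beta^q=\alpha$ and $\operatorname{ord}(\alpha)=q^i n$ with $\gcd(n,q)=1$, then $\operatorname{ord}(\beta)$ divides $q^{i+1}n$, and it cannot divide $q^i n$ (else $\alpha=\beta^q$ would have order dividing $q^{i-1}n$). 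Hence $v_q(\operatorname{ord}(\cdot))$ increases by exactly one per iteration. Using Euler's criterion $\alpha$ is a $q$-th residue iff $v_q(\operatorname{ord}(\alpha))<a$, so the loop exits after exactly $a-1$ rounds, where $a \le \log_q(p^k) \le k\log p/\log q$. At each iteration the lex-smallest $q$-th root is canonical because the ambient field $\Fp[X]/h(X)$ is; thus the stored pair $(h(X),\alpha)$ is canonical.

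For the runtime, I would bound each ingredient. For fixed $q$: factoring $(X^q-1)/(X-1)$ over $\Fp$ costs $\tilde{O}(q\log^2 p)$; each of the $O(k\log p/\log q)$ while-iterations factors a degree-$q$ polynomial over $\F_{p^k}$ at cost $\tilde{O}(q\cdot (k\log p)^2)$, with the lex-comparison and Euler-criterion checks subsumed. Using $q\le d$ and $k\le q-1<d$, one multiplication gives per-$q$ cost $\tilde{O}(qk^3\log^3 p) = \tilde{O}(d^4\log^3 p)$. Summing over the $O(\log d)$ distinct prime divisors of $d$ stays within $\tilde{O}(d^4\log^3 p)$. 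Finally, \cref{theorem:reduction-to-qth-non-residue} consumes the canonical pairs $(h,\alpha)$ and deterministically produces the irreducible polynomial of degree $d$ in $\tilde{O}(d^4\log p+\log^2 p)$ field operations over $\Fp$, i.e.\ $\tilde{O}(d^4\log^2 p+\log^3 p)$ bit operations, which is dominated. The whole output is canonical because each input to Shoup's deterministic reduction is.

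The main obstacle I anticipate is the termination-and-iteration argument for the while loop: one must pin down precisely how taking a $q$-th root affects $v_q(\operatorname{ord}(\alpha))$ and link this invariant to Euler's criterion for being a $q$-th residue, all while ensuring that the factored polynomial $X^q-\alpha$ actually splits over $\F_{p^k}$ whenever the loop body is executed. Everything else is bookkeeping: invoking \cref{theorem:equal-degree-factorization}, \cref{lemma:lex-smallest-poly-find}, and \cref{theorem:reduction-to-qth-non-residue} on the canonical objects constructed along the way.
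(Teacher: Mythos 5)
Your proposal is correct and follows essentially the same approach as the paper: invoke \cref{lemma:factor-xq-1} to justify equal-degree factorization of $(X^q-1)/(X-1)$, use \cref{lemma:lex-smallest-poly-find} to canonize $h(X)$ and each $q$-th root, bound the while loop by the $q$-adic part of $p^k-1$, and tally costs exactly as in the paper. The one cosmetic difference is that you track $v_q(\operatorname{ord}(\alpha))$ explicitly and derive that it rises by exactly one per iteration, whereas the paper simply asserts the order of $\alpha$ is multiplied by $q$ each step; your version is slightly more careful but lands on the identical bound $\ell \le k\log p$ and the same total $\tilde{O}(d^4\log^3 p)$.
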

\begin{proof}
    We need to show that the for loop in \cref{algo:construct-irred-mod-p}
    correctly computes the splitting field of $X^q - 1$ and finds a $q$-th non
    residue in the splitting field. Then, Line~\ref{line:reduction-to-shoup}
    will correctly output an irreducible polynomial of degree $d$ over $\Fp$
    (from \cref{theorem:reduction-to-qth-non-residue}).

    Let $k$ be the smallest integer greater than 0 such that $q \mid p^k - 1$
    ($k$ is the order of $p \pmod q$). From \cref{lemma:factor-xq-1}, $X^{q-1} +
    X^{q-2} + \cdots + X + 1$ factorizes as $g_1(X) g_2(X) \cdots
    g_{\frac{q-1}{k}}(X)$ where $g_i(X)$ are degree $k$ irreducible polynomials.
    Thus, by choosing the lexicographically smallest degree $k$ irreducible
    factor $h(X)$ of $X^q - 1$, we ensure that the choice of $h(X)$ is
    canonical. $\F_{p^k} \cong \Fp[X]/h(X)$ is the splitting field of $X^q - 1$
    which contains a $q$-th non residue. 
    
    Let $\omega \in \F_{p^k}$ be some primitive $q$-th root of unity. Suppose
    $\alpha$ is a $q$-th residue, and let $\beta \in \F_{p^k}$ such that $\alpha
    = \beta^q$ ($\beta$ is a $q$-th root of $\alpha$). Then, $\{ \beta, \beta
    \omega, \beta \omega^2, \ldots, \beta \omega^{q-1} \}$ are all $q$-th roots
    of $\alpha$. Thus, as required in Line~\ref{line:factor-xq-a}, $X^q -
    \alpha$ will factorize into linear factors. By ensuring that we pick the
    lexicographically smallest $q$-th root of $\alpha$, we ``canonize'' the
    computation of $q$-th non residue. This ``canonization'' process is akin to
    the one Gat and Goldwasser~\cite[Section 5]{gat-goldwasser-2011} used to
    compute $q$-th non residue in $\Fp$.

    But we still need to ensure that the while loop eventually terminates. Let
    $p^k - 1 = q^{\ell} r$, where $r$ is not divisible by $q$. Note that $\ell
    \leq k \log{p}$. In each iteration of the while loop, the order of $\alpha$
    in $\F_{p^k}^*$ increases by a factor of $q$. Since the order of $\alpha$
    divides $\abs{\F_{p^k}^*} = q^{\ell} r$ (by Lagrange's theorem), the while
    loop will terminate in at most $\ell$ steps. Thus, for each prime $q \mid d,
    q \neq p$, the for loop at Line~\ref{line:for-each-q}
    pseudo-deterministically constructs the splitting field $\F_{p^k}$ of $X^q -
    1$ and a $q$-th non residue in $\F_{p^k}$.

    Now we analyze the runtime. From \cref{theorem:equal-degree-factorization},
    equal degree factorization in Line~\ref{line:factor-xq-1} takes $\tilde{O}(q
    \log^2{p})$. From \cref{lemma:lex-smallest-poly-find}, lexicographically
    smallest $h(X)$ in Line~\ref{line:choose-h} can be chosen in $O(q \log{p})$.
    The while loop at Line~\ref{line:while-q-residue} runs at most $\ell$ times.
    The factoring step at Line~\ref{line:factor-xq-a} takes $\tilde{O}(q k^2
    \log^2{p})$ (using \cref{theorem:equal-degree-factorization}) and the
    lexicographically smallest $q$-th root can be picked in $O(qk \log{p})$
    time. Thus, the while loop takes $\tilde{O}(\ell q k^2 \log^2{p})$. Since
    $\ell \leq k\log{p}$ and $k < q$, the running time of the while loop can be
    upper bounded by $\tilde{O}(q^4 \log^3{p})$. Thus the overall running time
    of each iteration of the for loop is $\tilde{O}(q^4 \log^3{p})$. So we can
    upper bound the running time of the entire for loop by $\tilde{O}(d^4
    \log^3{p})$. Since the running time of Line~\ref{line:reduction-to-shoup} is
    also upper bounded by $\tilde{O}(d^4 \log^3{p})$ (from
    \cref{theorem:reduction-to-qth-non-residue}), the overall running time of
    the algorithm is $\tilde{O}(d^4 \log^3{p})$.
\end{proof}

We end this section by completing the proof of \cref{theorem:main-theorem}.

\begin{proof}[Proof of \cref{theorem:main-theorem}]
    \cref{algo:construct-irred-over-extension-field} pseudo-deterministically
    constructs irreducible polynomial of degree $d$ over $\F_{p^k}$. From
    \cref{theorem:construct-irred-over-extension-field}, it takes time $T(dk, p)
    + \tilde{O}(dk^3\log{p})$, where $T(dk, p)$ is time taken for the
    sub-routine which constructs degree $dk$ irreducible polynomial over $\Fp$.
    We use \cref{algo:construct-irred-mod-p} to implement this sub-routine,
    which from \cref{theorem:construct-irred-mod-p} takes time $\tilde{O}(d^4
    k^4 \log^3{p})$. Thus, the overall running time is $\tilde{O}(d^4 k^4
    \log^3{p})$. Let $q = p^k$. Thus, we have given a pseudo-deterministic
    algorithm for constructing irreducible polynomials of degree $d$ over $\Fq$
    in expected time $\tilde{O}(d^4 \log^4{q})$.
\end{proof}

\section{Conclusion}
\label{section:conclusion}

We have shown an efficient pseudo-deterministic algorithm for constructing
irreducible polynomials of degree $d$ over finite field $\Fq$. It is natural to
ask if this algorithm can be derandomized to get a fully deterministic
algorithm. Since our approach heavily relies on fast randomized polynomial
factoring algorithms, and no efficient deterministic factoring algorithms are
known, it is unclear how to derandomize it using the above approach. In fact, we
don't even know how to deterministically construct a quadratic non residue
modulo $p$ ($p$ is prime).

Another interesting question is to compare the hardness of deterministically
factoring polynomials and deterministically constructing irreducible polynomials
over finite fields. As mentioned earlier, Shoup~\cite{shoup-88} had showed that
constructing irreducible polynomials over finite fields can be efficiently (and
deterministically) reduced to factoring polynomials. This suggests that
factoring polynomials is as hard as constructing irreducible polynomials. But
what about the other direction? Would we be able to factor polynomials
efficiently if we could construct irreducible polynomials?

The answer is affirmative in the quadratic case. Suppose we are given a
quadratic non residue $\beta$ modulo $p$. Then we can compute the square roots
of any quadratic residue $\alpha$ module $\Fp$. In other words, given an
irreducible polynomial $X^2 - \beta$, we can factorize $X^2 - \alpha$. This can
be achieved using the Tonelli-Shanks~\cite{shanks-73, tonelli-1891} algorithm
for computing square roots modulo $p$. However, this technique does not easily
generalize to higher degrees $d$, so there isn't enough evidence to confirm that
constructing irreducible polynomials is as hard as factoring polynomials in
general. We believe this is an interesting open question that can shine more
light on the complexity of both these problems.

Gat and Goldwasser~\cite{gat-goldwasser-2011} highlighted the open problem of
pseudo-deterministically constructing $n$-bit prime numbers, which still remains
unsolved. Chen et al.~\cite{chen-lu-oliveira-santhanam-2023} solved this problem
but with the caveat that their algorithm works in the infinitely often regime.
Their algorithm is based on complexity theoretic ideas. In this paper, we gave a
pseudo-deterministic algorithm for constructing irreducible polynomials, which
leverages the structure of irreducible polynomials. Perhaps similarly one could
hope to get an efficient pseudo-deterministic algorithm for constructing primes
using some number theoretic approaches.

\section*{Acknowledgements}

The author would like to thank Mrinal Kumar and Ramprasad Saptharishi for
introducing him to the question of pseudo-deterministic construction of
irreducible polynomials and for the many insightful discussions along the way.  

\bibliography{ref}

\begin{thebibliography}{10}

\bibitem{adleman-lenstra-86}
Leonard~M. Adleman and Hendrik W.~Lenstra Jr.
\newblock Finding irreducible polynomials over finite fields.
\newblock In Juris Hartmanis, editor, {\em Proceedings of the 18th Annual {ACM}
  Symposium on Theory of Computing, May 28-30, 1986, Berkeley, California,
  {USA}}, STOC '86, pages 350--355, New York, NY, USA, 1986. {ACM}.
\newblock \href {https://doi.org/10.1145/12130.12166}
  {\path{doi:10.1145/12130.12166}}.

\bibitem{agrawal-kayal-saxena-05}
Manindra Agrawal, Neeraj Kayal, and Nitin Saxena.
\newblock {PRIMES} is in {P}.
\newblock {\em Annals of Mathematics}, 160(2):781--793, 2004.
\newblock URL: \url{http://www.jstor.org/stable/3597229}.

\bibitem{cantor-zassenhaus-81}
David~G. Cantor and Hans Zassenhaus.
\newblock A new algorithm for factoring polynomials over finite fields.
\newblock {\em Mathematics of Computation}, 36(154):587--592, 1981.
\newblock URL: \url{http://www.jstor.org/stable/2007663}.

\bibitem{chen-lu-oliveira-santhanam-2023}
Lijie Chen, Zhenjian Lu, Igor~C. Oliveira, Hanlin Ren, and Rahul Santhanam.
\newblock Polynomial-time pseudodeterministic construction of primes.
\newblock In {\em 64th {IEEE} Annual Symposium on Foundations of Computer
  Science, {FOCS} 2023, Santa Cruz, CA, USA, November 6-9, 2023}, pages
  1261--1270, Los Alamitos, CA, USA, nov 2023. {IEEE}.
\newblock \href {https://doi.org/10.1109/FOCS57990.2023.00074}
  {\path{doi:10.1109/FOCS57990.2023.00074}}.

\bibitem{couveignes-lercier-2013}
Jean-Marc Couveignes and Reynald Lercier.
\newblock Fast construction of irreducible polynomials over finite fields.
\newblock {\em Israel Journal of Mathematics}, 194(1):77–105, mar 2013.
\newblock URL: \url{http://link.springer.com/10.1007/s11856-012-0070-8}, \href
  {https://doi.org/10.1007/s11856-012-0070-8}
  {\path{doi:10.1007/s11856-012-0070-8}}.

\bibitem{gat-goldwasser-2011}
Eran Gat and Shafi Goldwasser.
\newblock Probabilistic search algorithms with unique answers and their
  cryptographic applications.
\newblock {\em Electron. Colloquium Comput. Complex.}, {TR11-136}(TR11-136),
  oct 2011.
\newblock URL: \url{https://eccc.weizmann.ac.il/report/2011/136}, \href
  {https://arxiv.org/abs/TR11-136} {\path{arXiv:TR11-136}}.

\bibitem{kedlaya-umans-11}
Kiran~S. Kedlaya and Christopher Umans.
\newblock Fast polynomial factorization and modular composition.
\newblock {\em {SIAM} J. Comput.}, 40(6):1767--1802, 2011.
\newblock \href {https://arxiv.org/abs/https://doi.org/10.1137/08073408X}
  {\path{arXiv:https://doi.org/10.1137/08073408X}}, \href
  {https://doi.org/10.1137/08073408X} {\path{doi:10.1137/08073408X}}.

\bibitem{lenstra-91}
H.~W. Lenstra.
\newblock Finding isomorphisms between finite fields.
\newblock {\em Mathematics of Computation}, 56(193):329–347, 1991.
\newblock URL:
  \url{https://www.ams.org/mcom/1991-56-193/S0025-5718-1991-1052099-2/}, \href
  {https://doi.org/10.1090/S0025-5718-1991-1052099-2}
  {\path{doi:10.1090/S0025-5718-1991-1052099-2}}.

\bibitem{lidl-niederreiter-94}
Rudolf Lidl and Harald Niederreiter.
\newblock {\em Introduction to Finite Fields and their Applications}.
\newblock Cambridge University Press, Cambridge, 2 edition, 1994.
\newblock URL:
  \url{https://www.cambridge.org/core/books/introduction-to-finite-fields-and-their-applications/84CD611976678E270A2146652E1773D6},
  \href {https://doi.org/10.1017/CBO9781139172769}
  {\path{doi:10.1017/CBO9781139172769}}.

\bibitem{rabin-80}
Michael~O. Rabin.
\newblock Probabilistic algorithms in finite fields.
\newblock {\em {SIAM} J. Comput.}, 9(2):273--280, 1980.
\newblock \href {https://arxiv.org/abs/https://doi.org/10.1137/0209024}
  {\path{arXiv:https://doi.org/10.1137/0209024}}, \href
  {https://doi.org/10.1137/0209024} {\path{doi:10.1137/0209024}}.

\bibitem{shanks-73}
Daniel Shanks.
\newblock Five number-theoretic algorithms.
\newblock In {\em Proceedings of the {S}econd {M}anitoba {C}onference on
  {N}umerical {M}athematics ({U}niv. {M}anitoba, {W}innipeg, {M}an., 1972)},
  volume No. VII of {\em Congress. Numer.}, pages 51--70. Utilitas Math.,
  Winnipeg, MB, 1973.

\bibitem{shoup-88}
Victor Shoup.
\newblock New algorithms for finding irreducible polynomials over finite
  fields.
\newblock In {\em 29th Annual Symposium on Foundations of Computer Science,
  White Plains, New York, USA, 24-26 October 1988}, pages 283--290. {IEEE}
  Computer Society, 1988.
\newblock \href {https://doi.org/10.1109/SFCS.1988.21944}
  {\path{doi:10.1109/SFCS.1988.21944}}.

\bibitem{tao-croot-helfgott-12}
Terence Tao, Ernest~Croot III, and Harald Helfgott.
\newblock Deterministic methods to find primes.
\newblock {\em Math. Comput.}, 81(278):1233--1246, 2012.
\newblock \href {https://doi.org/10.1090/S0025-5718-2011-02542-1}
  {\path{doi:10.1090/S0025-5718-2011-02542-1}}.

\bibitem{tonelli-1891}
Alberto Tonelli.
\newblock Bemerkung über die auflösung quadratischer congruenzen.
\newblock {\em Nachrichten von der Königl. Gesellschaft der Wissenschaften und
  der Georg-Augusts-Universität zu Göttingen}, 1891:344--346, 1891.
\newblock URL: \url{http://eudml.org/doc/180329}.

\bibitem{von-zur-gathen-shoup-92}
Joachim von~zur Gathen and Victor Shoup.
\newblock Computing frobenius maps and factoring polynomials.
\newblock {\em Comput. Complex.}, 2(3):187--224, sep 1992.
\newblock \href {https://doi.org/10.1007/BF01272074}
  {\path{doi:10.1007/BF01272074}}.

\end{thebibliography}

\end{document}